\newcommand\myeq{\stackrel{\mathclap{\normalfont\mbox{a.e.}}}{=}}
\newtheorem{theorem}{Theorem}
\newtheorem{assump}{Assumption}
\newcommand\undermat[2]{%
	\makebox[0pt][l]{$\smash{\underbrace{\phantom{%
					\begin{matrix}#2\end{matrix}}}_{\text{$#1$}}}$}#2}
\DeclareMathOperator*{\argmin}{arg\,min}
\newcommand*{\affmark}[1][*]{\textsuperscript{#1}}
\title{A Computationally Efficient algorithm to estimate the Parameters of a Two-Dimensional Chirp Model with the product term}
\author{Abhinek Shukla\protect\affmark[1]\affmark[,3], Rhythm Grover\affmark[2], Debasis Kundu\affmark[1], and Amit Mitra\affmark[]}
\affil[1]{Department of Mathematics and Statistics, Indian Institute of Technology Kanpur, Kanpur - 208016, India}
\affil[2]{Mehta Family School of Data Science and Artificial Intelligence, Indian Institute of Technology Guwahati, Assam-781039, India}
\affil[3]{Corresponding author. Email: abhushukla@gmail.com}
\date{}
\begin{document}
\maketitle
	\begin{abstract}
Chirp signal models and  their generalizations have been used to model many natural and man-made phenomena in signal processing and time series literature. In recent times, several methods have been proposed for parameter estimation of these models. These methods however are either statistically sub-optimal or computationally burdensome, specially for two dimensional (2D) chirp models.  In this paper, we consider the problem of parameter estimation of 2D chirp models and propose a computationally  efficient estimator and establish asymptotic theoretical properties of the proposed estimators.
And the proposed estimators  are observed to have the
same rates of convergence as the least squares estimators (LSEs).  Furthermore, the proposed estimators of chirp rate parameters
are shown to be asymptotically optimal. Extensive and detailed numerical simulations are conducted,  which support theoretical results of the proposed estimators. 
	\end{abstract}
		\text{ \\~\\~\\ }{\bf Keywords:}
		2D Chirp model, least squares estimators, stationary linear process, consistency, asymptotic normality.

	\section{Introduction}
			This paper addresses the problem of parameter estimation of a 2D chirp signal model defined as follows:
	\begin{align}\label{model_2D}
		y(m,n) &=A^0\cos(\alpha^0 m+\beta^0 m^2+\gamma^0 n+\delta^0 n^2+\mu^0mn)\nonumber\\&+B^0\sin(\alpha^0 m+\beta^0 m^2+\gamma^0 n+\delta^0 n^2+\mu^0mn)+X(m,n),\\& \hspace{15pt} m=1,2,\ldots,M, n=1,2,\ldots,N.\nonumber
	\end{align}	
	Here, $ y(m,n)$ is the observed real valued signal and $X(m,n)$ is the additive noise term.  $A^0,B^0$ are amplitude parameters, $\alpha^0,\gamma^0$ are frequency parameters, $\beta^0,\delta^0$ are frequency rates or chirp rates, and $\mu^0$ is the coefficient of product term. $\mathrm{\bm{\xi}}^0 = (\alpha^0,\beta^0 , \gamma^0 ,  \delta^0 ,   \mu^0)^\top \hspace{2pt}$ represents vector of non-linear parameters.  This model can be used to describe signals having  constant amplitude with frequency to be a linear function of spatial co-ordinates. The product term \(mn\) in such chirp models \eqref{model_2D}, is an important characteristic of numerous measurement interferometric signals or radar signal returns. \\~\\
 The parameter estimation problem for model \eqref{model_2D} is encountered in many real-life applications such as    2D-homomorphic signal processing, magnetic resonance imaging (MRI), optical imaging, interferometric	synthetic aperture radar (INSAR), modeling non-homogeneous patterns  in the texture image captured by a camera due to perspective or orientation ( see e.g., \cite{Francos_1995},\cite{Francos_1998},  \cite{Francos_1999} and the references cited therein). 2D chirps have been used as a spreading function in digital watermarking which is also helpful in   data security, medical safety,  fingerprinting, and observing content manipulations, see \cite{Zhang_watermark}.
 2D chirp signals have also been used to model Newton's rings \cite{Guo_Newton_Rings}.  These rings are predominantly used in
 testing spherical and flatting optical surface and  curvature radius measurement. \\~\\
	Many algorithms based on different approaches have been put forward in the literature to solve such problems.  Polynomial phase differencing (PD) operator was introduced in  \cite{Francos_1995} as  an extension of the polynomial phase transform proposed in \cite{Peleg_1991}. Several works \cite{Francos_1996}, \cite{Francos_1998} and \cite{Francos_1999} utilized PD operator to develop computationally efficient algorithms
	for estimating similar polynomial phase signals.  Cubic phase function (CPF) proposed in \cite{O'Shea_2002_CPF},  was extended in  \cite{Zhang_2008_CPF} for similar 2D chirp signal modelling. 	Further,   CPF was  utilized to estimate  2D cubic phase signal using genetic algorithm in \cite{Djurovic_2010_genetic}.   Consistency and asymptotic normality of LSEs for a general  2D polynomial phase signal (PPS) model have been derived in  \cite{Lahiri_2017}. A  finite step computationally efficient procedure
	for a similar 2D chirp signal model proposed in  \cite{Lahiri_2012} was proved asymptotically equivalent to
	LSEs.   Quasi-Maximum Likelihood (QML) algorithm \cite{Djurovic_2014} proposed for 1D PPS, was generalized for 2D PPS in \cite{Djurovic_2017}.  Further approximate least squares estimators (ALSEs)  proposed in \cite{Grover_2d_2018}  have been proved to be asymptotically equivalent to LSEs. An efficient estimation procedure based on fixed dimension technique, presented in \cite{Grover_efficient}  was shown to be equivalent to the optimal LSEs, for a 2D chirp model without the product term \(mn\). \\~\\
	Estimators based on phase differencing strategies or high order ambiguity function (HAF) or  some of their modifications
	are computationally easier to obtain. However, the performance of estimation  deteriorate 
	below a relatively high signal-to-noise ratio (SNR) threshold and are sub-optimal. Methods that use  PD in the steps of estimation, usually estimate coefficients of the highest
	degree first, and then subsequently  estimate the coefficients of a lower degree from the demodulated or de-chirped signal. Therefore, the
	estimation error of highest degree coefficients accumulates and affects estimation accuracy of lower degree
	coefficients quite seriously. For more details, one can refer to \cite{Barbaross_1998}, \cite{Djurovic_2017} and \cite{Wu_2008}.  Till date, there is no detailed study of the theoretical properties of the estimators such as CPF and QML, such as strong consistency and asymptotic normality. Recently, optimal estimators for a simpler 2D chirp model without the interaction term  have  been developed in  \cite{Grover_efficient}. However, the results in \cite{Grover_efficient} cannot be generalized directly for the underlying model \eqref{model_2D}. It may be noted that the model considered in this paper is more general, as it takes into account the interaction term $\mu^0 mn$. Due to the presence of this interaction term coefficient $\mu^0$, the estimation becomes more difficult as the estimators of $\alpha^0$ and $\gamma^0$ are no longer independent (as in the case for \cite{Grover_efficient}), and hence making their computation as well as study of theoretical analysis becomes more challenging. 
	The problem becomes more complicated under the assumption of general stationary linear process error assumption.    \\~\\
	The main contributions of this paper are;  providing a computationally efficient algorithm to estimate the parameters of the model defined in \eqref{model_2D}, and further establishing theoretical asymptotic properties of the proposed  estimators.
	The proposed algorithm is motivated by the fact that a 2D chirp model with five non-linear parameters can be viewed through a number of  1D chirp models with two non-linear parameters and hence  computational complexity of 2D models can be reduced.\\~\\
	 The key attributes of the proposed method are that it is computationally faster than the conventional optimal methods such as LSEs, maximum likelihood estimators, or ALSEs and at the same time, having desirable statistical properties such as, attaining same rates of convergence as the optimal LSEs. In fact, the proposed estimators of the chirp rate parameters, then these  have the same asymptotic variance as that of the traditional LSEs, and hence are statistically optimal.\\~\\
 The rest of the paper is organised as follows: the methodology to obtain the proposed  estimators is presented in Section \ref{methd}. The model assumptions and the asymptotic  theoretical results are given  in Section \ref{resul}.   
	In Section \ref{simul}, the finite sample performance of the proposed estimators is demonstrated through simulation studies. In this section, a  comparison of the performance of the proposed  estimators with the state-of-the-art methods such as least squares method, approximate least squares method, and 2D multilag HAF method is also presented.
	Finally, Section \ref{conclu} concludes the paper, followed by  detailed proofs in appendices.
	\section{Estimation Methodology}\label{methd}
In this section, we present the proposed method of estimation. 	Let the data matrix for model \eqref{model_2D} be denoted as 
$$\bm{Y} = \begin{bmatrix}
	y(1,1)& y(1,2)&\ldots &y(1,N)\\
	y(2,1)& y(2,2)& \ldots &y(2,N)\\
	\vdots&\vdots &\ddots &\vdots\\
	y(M,1)&y(M,2)&\ldots& y(M,N)
\end{bmatrix}_{M\times N}.$$
The proposed algorithm uses the fact that for each fixed column (or row) of \(\bm{Y}\), the 2D chirp model breaks down to a cascade of 1D chirp models.\\~\\
Realise that if we fix one dimension say $n=n_0$ in \eqref{model_2D}, then the 2D chirp can be seen as 1-D chirp for $m=1,2,\ldots,M$, as follows:
	\begin{align} \label{decomp_1}
		y(m,n_0) = A^0(n_0)\cos\big((\alpha^0 +n_0\mu^0)m+\beta^0 m^2 \big)&+B^0(n_0)\sin\big((\alpha^0 +n_0\mu^0)m+\beta^0 m^2 \big)\nonumber\\&+X(m,n_0),\end{align}
\vskip -2 cm	\begin{align*}
		\text{where, }\hspace{10pt}A^0(n_0) &= A^0\cos( \gamma^0 n_0+\delta^0 n_0^2)+B^0\sin( \gamma^0 n_0+\delta^0 n_0^2),\\ \hspace{45pt}
		B^0(n_0) &= -A^0\sin( \gamma^0 n_0+\delta^0 n_0^2)+B^0\cos( \gamma^0 n_0+\delta^0 n_0^2).
	\end{align*}
	Similarly, for a fixed $m=m_0$, we have 1-D chirp for $n=1,2,\ldots,N,$
	\begin{align}\label{decomp_2}
		y(m_0,n) = \widetilde{A}^0(m_0)\cos\big((\gamma^0+m_0\mu^0) n+\delta^0 n^2 \big)&+\widetilde{B}^0(m_0)\sin\big((\gamma^0+m_0\mu^0) n+\delta^0 n^2 \big)\nonumber \\&+X(m_0,n).
	\end{align}
Equation \eqref{decomp_1} represents 1-D chirp signal model with $\alpha^0+n_0\mu^0$ and  $\beta^0$ as the frequency and frequency rate parameters respectively. Similarly, equation \eqref{decomp_2} represents 1-D chirp signal model with $\gamma^0+m_0\mu^0$ and  $\delta^0$ as the frequency and frequency rate parameters respectively.\\~\\
 Hence, our methodology is developed by estimating parameters of these 1D chirps based on a particular column (or row) vector of data matrix, rather than estimating the whole 2D chirp parameters based on the full data matrix. Therefore this procedure reduces computational burden to estimate model parameters drastically.
Further suppose column vector  $\bm{Y}_{Mn_0}$ denotes the $n_0^{th}$ column of data matrix $\bm{Y}$ and column vector  $\bm{Y}_{m_0N}$ denotes the transpose of $m_0^{th} $  row of data matrix $\bm{Y}$. Define $Z_k(\alpha_1,\alpha_2)$ matrix as 
\begin{equation}Z_k(\alpha_1,\alpha_2)=
	\begin{bmatrix}
		\cos(\alpha_1+\alpha_2) & \sin(\alpha_1+\alpha_2)\\
		\cos(\alpha_12+\alpha_22^2) & \sin(\alpha_12+\alpha_22^2)\\
		\vdots&\vdots\\
		\cos(\alpha_1k+\alpha_2k^2) & \sin(\alpha_1k+\alpha_2k^2)\\
	\end{bmatrix}_{k\times 2}.
\end{equation} 
We need to estimate the model parameters of \eqref{decomp_1} which is a 1D chirp model, so we use LSEs to estimate the parameters. We obtain LSEs of non-linear parameters in model \eqref{decomp_1} for a fixed $n_0$ by defining following reduced sum of squares, see \cite{Grover_efficient}:
 \begin{equation}
R_{Mn_0}(\alpha_1,\alpha_2) = \bm{Y}_{Mn_0}^{T}\Big(I_{M\times M}-P_{Z_M}(\alpha_1,\alpha_2)\Big)\bm{Y}_{Mn_0},
\end{equation}
where, $P_{Z_M}(\alpha_1,\alpha_2) = Z_M(\alpha_1,\alpha_2)\Big(Z_M(\alpha_1,\alpha_2)^\top Z_M(\alpha_1,\alpha_2)\Big)^{-1}Z_M(\alpha_1,\alpha_2)^\top $ and $I_{M\times M}$ is the $M\times M$ identity matrix.\
Then, \begin{equation}\label{express_alp_bet}
	 (\widehat{\alpha}_{n_0},\widehat{\beta}_{n_0})^\top  = {\displaystyle\argmin_{\alpha_1,\alpha_2}}R_{Mn_0}(\alpha_1,\alpha_2)
\end{equation} 
is  the proposed estimator of $(\alpha^0+n_0\mu^0,\beta^0)^\top $  based on minimizing the sum of squares corresponding to $n_0^{th}$ column of the data matrix $\bm{Y}$. Similarly, we can obtain  LSEs of parameters in model \eqref{decomp_2} by defining 
\begin{equation}
R_{m_0N}(\alpha_1,\alpha_2) = \bm{Y}_{m_0N}^{T}\Big(I_{N\times N}-P_{Z_N}(\alpha_1,\alpha_2)\Big)\bm{Y}_{m_0N},
\end{equation}
where, $P_{Z_N}(\alpha_1,\alpha_2) = Z_N(\alpha_1,\alpha_2)\Big(Z_N(\alpha_1,\alpha_2)^\top Z_N(\alpha_1,\alpha_2)\Big)^{-1}Z_N(\alpha_1,\alpha_2)^\top $ and $I_{N\times N}$ is the $N\times N$ identity matrix. 
Then, \begin{equation} \label{express_gam_delt}
	(\widehat{\gamma}_{m_0},\widehat{\delta}_{m_0})^\top  = {\displaystyle\argmin_{\alpha_1,\alpha_2}}R_{m_0N}(\alpha_1,\alpha_2)
	\end{equation} 
 will be the proposed estimator of $(\gamma^0+m_0\mu^0,\delta^0)^\top $  based on minimizing the sum of squares corresponding to  $m_0^{th}$ row of data matrix $\bm{Y}$. \\~\\
  We observe that for each fixed column, we get an estimate of the same chirp rate parameter $\beta^0$ in \eqref{express_alp_bet}, and also an estimate of frequency parameter which is a linear combination of $\alpha^0$ and $\mu^0$. Similarly, estimates of $\delta^0$ and a linear combination of $\gamma^0$ and $\mu^0$ for a fixed row in \eqref{express_gam_delt} has been obtained. It is important to note that the linearity of parameters of 1D-chirp models plays a crucial role in getting proposed estimators of $\alpha^0,\gamma^0$ and $\mu^0$ by fitting a linear regression model as follows.\\~\\
Once the parameters corresponding to each $(M+N)$ 1-D chirp models have been estimated. We apply the following three steps to obtain final estimates of parameters of the model \eqref{model_2D}:
\begin{enumerate}
	\item[Step-1.] Let $\bm{\Gamma}^\top =\begin{bmatrix}
		1&1&\cdots &1&0&0&\cdots&0\\
		0&0&\cdots&0&1&1&\cdots&1\\
		1&2&\cdots&N&1&2&\cdots&M
	\end{bmatrix}$, and  $\bm{\Lambda}^\top =\begin{bmatrix}
		\widehat{\alpha}_1&
		\widehat{\alpha}_2&
		\cdots&
		\widehat{\alpha}_N&
		\widehat{\gamma}_1&
		\widehat{\gamma}_2&
		\cdots&
		\widehat{\gamma}_M
	\end{bmatrix}$. \\
	Combine the obtained estimates as follows:
	\begin{equation} \label{lin_eq}
		\bm{\Lambda}=\bm{\Gamma}\begin{bmatrix}
			\alpha\\
			\gamma\\
			\mu
		\end{bmatrix}.
	\end{equation}
	 Then  estimate of $(\alpha^0,\gamma^0,\mu^0)^\top$   is $\Big(\bm{\Gamma}^\top \bm{\Gamma}\Big)^{-1}\bm{\Gamma}^\top \bm{\Lambda}$.
	\item [Step-2.] The estimates of  $\beta^0$ and $\delta^0$ are  simply the averages $\displaystyle\widehat{\beta}= \cfrac{1}{N}\sum_{n=1}^{N}\widehat{\beta}_{n}$ and $\displaystyle\widehat{\delta} =\cfrac{1}{M}\sum_{m=1}^{M}\widehat{\delta}_{m}$, \\respectively.
	\item[Step-3.] After getting estimates of non-linear parameters, $\widehat{\mathrm{\bm{\xi}}} = (\widehat{\alpha},\widehat{\beta},\widehat{\gamma},\widehat{\delta}, \widehat{\mu})^\top \hspace{2pt}$, the amplitude parameter estimates can be provided as follows: 
\end{enumerate} \begin{equation}
	\begin{bmatrix}
		\widehat{A}\\\widehat{B}
	\end{bmatrix} =\begin{bmatrix}
	\cfrac{2}{{MN}}\displaystyle \sum_{m=1}^{M}\sum_{n=1}^{N}y(m,n)\cos\widehat{\phi} \\
	\cfrac{2}{{MN}}\displaystyle \sum_{m=1}^{M}\sum_{n=1}^{N}y(m,n)\sin\widehat{\phi}
\end{bmatrix}.
\end{equation}
\section{Theoretical Results}\label{resul}
In this section, we first state the model assumptions required to derive the theoretical asymptotic properties explicitly. These are as follows:
\begin{assump}\label{asp_1}
	$X(m,n)$ can be expressed as a linear combination of a double array sequence of independently and identically distributed (i.i.d.) random variables $\{\epsilon(m,n)\}$ with mean $0$, variance $\sigma^2$ and  finite fourth moment.
	\begin{equation} \label{err_1}
		X(m,n) = \displaystyle{\sum_{i=-\infty}^{\infty}\sum_{j=-\infty}^{\infty}}a(i,j)\epsilon(m-i,n-j), \end{equation}  
	such that
	\begin{equation} \label{err_2}
		\hspace{5pt}\displaystyle{\sum_{i=-\infty}^{\infty}\sum_{j=-\infty}^{\infty}}|a(i,j)|<\infty. \end{equation} 
\end{assump}
\begin{assump}\label{asp_2}
  True parameter $	\mathrm{\bm{\theta}}^0 = (A^0,B^0,\alpha^0,\beta^0,\gamma^0,\delta^0,\mu^0)^\top  $ is an interior point of parameter space $\Theta$, where	$\Theta = (-\infty,\infty)\times(-\infty,\infty)\times[0,2\pi]\times[0,\pi/2]\times[0,2\pi]\times[0,\pi/2]\times[0,2\pi]$, and  $A^{0^2}+B^{0^2}>0$.
\end{assump}
Assumption \ref{asp_1} puts the model under a very general set-up of  noise contamination as it includes the dependent relationship too. Assumption \ref{asp_2} is taken to assure the absence of any identifiability problem and non-zero deterministic part of the signal. Under these general assumptions, we have derived strong consistency and  asymptotic normality of the estimators. The obtained results are stated in the following theorems. 
\begin{theorem}\label{thm_1}
	Under assumptions \ref{asp_1} and \ref{asp_2}, the proposed  estimator of parameter $\mathrm{\bm{\theta}}^0$ is strongly consistent, i.e.,  $$\widehat{\mathrm{\bm{\theta}}}\xrightarrow{a.s.}\mathrm{\bm{\theta}}^0 \mbox{\hspace{10pt} as }   
	\min\{M,N\}\xrightarrow{}\infty.$$ 
\end{theorem}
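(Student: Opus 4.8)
\noindent\emph{Sketch of proof.} Since $\widehat{\mathrm{\bm{\theta}}}$ is assembled in Steps~1--3 from the LSEs of the $M+N$ auxiliary one–dimensional chirps \eqref{decomp_1}--\eqref{decomp_2}, the plan is to first establish strong consistency, with rates, of those 1D LSEs uniformly over the fixed coordinate, and then transfer it through the (essentially linear) assembly steps. Fix $n_0$: by \eqref{decomp_1} this is a bona fide 1D chirp whose signal power $A^0(n_0)^2+B^0(n_0)^2=A^{0^2}+B^{0^2}>0$ is independent of $n_0$ under Assumption~\ref{asp_2}, and whose noise $\{X(m,n_0)\}_m$ is, by \eqref{err_1}--\eqref{err_2}, a stationary sequence with absolutely summable autocovariances and finite fourth moment; moreover the i.i.d.\ structure of $\{\epsilon(m,n)\}$ makes the sequences $\{X(m,n_0)\}_m$ identically distributed across $n_0$. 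Hence the established consistency theory for 1D chirp LSEs (cf.\ \cite{Lahiri_2012,Grover_efficient}) gives $\widehat{\alpha}_{n_0}\xrightarrow{a.s.}\alpha^0+n_0\mu^0$ and $\widehat{\beta}_{n_0}\xrightarrow{a.s.}\beta^0$, with the chirp-type a.s.\ rates $\widehat{\alpha}_{n_0}-(\alpha^0+n_0\mu^0)=o(M^{-1})$, $\widehat{\beta}_{n_0}-\beta^0=o(M^{-2})$, and analogously for $\widehat{\gamma}_{m_0},\widehat{\delta}_{m_0}$ with $M$ replaced by $N$. Combining the common distribution of the sub-problems with a fourth-moment maximal inequality (using \eqref{err_2}) and Borel--Cantelli then upgrades these to statements uniform over the fixed index, e.g.\ $\max_{1\le n_0\le N}\big|\widehat{\alpha}_{n_0}-(\alpha^0+n_0\mu^0)\big|\to0$ a.s.\ as $\min\{M,N\}\to\infty$, and similarly for the rates.

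Next I would push this through Step~1. Writing $\bm{\eta}^0=(\alpha^0,\gamma^0,\mu^0)^\top$ and $\bm{\Lambda}=\bm{\Gamma}\bm{\eta}^0+\bm{e}$, where $\bm{e}$ stacks the errors $\widehat{\alpha}_{n_0}-(\alpha^0+n_0\mu^0)$ and $\widehat{\gamma}_{m_0}-(\gamma^0+m_0\mu^0)$, the proposed estimator obeys $\widehat{\bm{\eta}}-\bm{\eta}^0=(\bm{\Gamma}^\top\bm{\Gamma})^{-1}\bm{\Gamma}^\top\bm{e}$. The $3\times3$ matrix $\bm{\Gamma}^\top\bm{\Gamma}$ has entries that are explicit power sums in $M,N$; after the diagonal rescaling $\bm{D}=\mathrm{diag}\big(\sqrt{N},\sqrt{M},(\sum n^2+\sum m^2)^{1/2}\big)$ one checks that $\bm{D}^{-1}(\bm{\Gamma}^\top\bm{\Gamma})\bm{D}^{-1}$ stays uniformly well-conditioned as $\min\{M,N\}\to\infty$, so each coordinate of $\widehat{\bm{\eta}}-\bm{\eta}^0=\bm{D}^{-1}\,O(1)\,\bm{D}^{-1}\bm{\Gamma}^\top\bm{e}$ is a normalized, possibly index-weighted, average of the sub-problem errors and hence tends to $0$ a.s.\ by the uniform bounds above; this yields $\widehat{\alpha}\to\alpha^0$, $\widehat{\gamma}\to\gamma^0$, $\widehat{\mu}\to\mu^0$ a.s. For Step~2, $\widehat{\beta}=N^{-1}\sum_n\widehat{\beta}_n$ and $\widehat{\delta}=M^{-1}\sum_m\widehat{\delta}_m$ are averages of uniformly strongly consistent estimators, so $\widehat{\beta}\to\beta^0$ and $\widehat{\delta}\to\delta^0$ a.s.

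For the amplitudes in Step~3, substitute $y(m,n)=A^0\cos\phi^0(m,n)+B^0\sin\phi^0(m,n)+X(m,n)$ into the formula for $(\widehat{A},\widehat{B})$, with $\phi^0$ and $\widehat{\phi}$ the phases at $\mathrm{\bm{\theta}}^0$ and $\widehat{\mathrm{\bm{\theta}}}$. The noise term $\tfrac{2}{MN}\sum_{m,n}X(m,n)\cos\widehat{\phi}(m,n)$ is $o(1)$ a.s.\ by a strong law for weighted sums of $X$ that is uniform over the compact nonlinear-parameter space (equicontinuity plus \eqref{err_2}). In the deterministic term, product-to-sum identities split the phase into $\phi^0+\widehat{\phi}$, whose contribution vanishes by standard oscillatory-sum estimates for chirp phases (cf.\ \cite{Lahiri_2017}), and $\phi^0-\widehat{\phi}$, whose coefficients are $M(\widehat{\alpha}-\alpha^0)$, $M^2(\widehat{\beta}-\beta^0)$, $N(\widehat{\gamma}-\gamma^0)$, $N^2(\widehat{\delta}-\delta^0)$ and $MN(\widehat{\mu}-\mu^0)$; once these are $o(1)$ a.s., $\tfrac{1}{MN}\sum_{m,n}\cos(\phi^0-\widehat{\phi})\to1$ and the corresponding sine sum $\to0$, giving $\widehat{A}\to A^0$ and $\widehat{B}\to B^0$ a.s.

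The rate conditions entering Step~3 are where the real work lies, and the binding one, $MN(\widehat{\mu}-\mu^0)\to0$ a.s., cannot be obtained from crude uniform bounds on $\bm{e}$: from the explicit cofactors of $\bm{\Gamma}^\top\bm{\Gamma}$ one finds $\widehat{\mu}-\mu^0\asymp(N^3+M^3)^{-1}\big[\sum_{n_0}(n_0-\tfrac{N}{2})e^{\alpha}_{n_0}+\sum_{m_0}(m_0-\tfrac{M}{2})e^{\gamma}_{m_0}\big]$, for which a naive estimate gives only $O\big((N^3+M^3)^{-1}N^2\max_{n_0}|e^{\alpha}_{n_0}|\big)$, which is too weak when $N$ grows much faster than $M$. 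The way through is to linearize each $e^{\alpha}_{n_0}$ in the noise $\{X(m,n_0)\}_m$, note via \eqref{err_2} that $\mathrm{Cov}(e^{\alpha}_{n_0},e^{\alpha}_{n_1})$ is summable in $|n_0-n_1|$, and apply a fourth-moment maximal inequality to the weighted sum, which gains the cancellation factor $(\sum n_0^2)^{1/2}$ in place of $\sum n_0^2$ and yields $MN(\widehat{\mu}-\mu^0)=O\big((MN)^{-1/4+\varepsilon}\big)\to0$ a.s. I would allocate most of the effort to this step; everything else reduces to standard chirp lemmas and bookkeeping with the explicit matrix $\bm{\Gamma}^\top\bm{\Gamma}$.
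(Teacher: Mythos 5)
Your proposal follows the same architecture as the paper's own proof: per-column/row 1D chirp LSE rates $\widehat{\alpha}_{n_0}-(\alpha^0+n_0\mu^0)=o(1/M)$, $\widehat{\beta}_{n_0}-\beta^0=o(1/M^2)$ (and their row analogues), pushed through the explicit cofactors of $\bm{\Gamma}^\top\bm{\Gamma}$ for $(\widehat{\alpha},\widehat{\gamma},\widehat{\mu})$ and through plain averaging for $(\widehat{\beta},\widehat{\delta})$; the paper's terms $a_1,a_2,a_3$ and $b_1,b_2,b_3$ are exactly your coordinates of $(\bm{\Gamma}^\top\bm{\Gamma})^{-1}\bm{\Gamma}^\top\bm{e}$, so for the five nonlinear parameters the two arguments coincide. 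You diverge in two places where you are more careful than the source. First, you note explicitly that summing $N$ error terms each $o(1/M)$ requires the $o(\cdot)$ to hold uniformly in $n_0$, and you propose a maximal-inequality/Borel--Cantelli upgrade; the paper silently treats $\sum_{n_0}o(1/M)$ as $N\times o(1/M)$. Second, for the amplitudes the paper says only that consistency ``follows from the continuity mapping theorem,'' which does not suffice as stated: $\widehat{A}$ depends on $\widehat{\bm{\xi}}$ through $\cos\widehat{\phi}(m,n)$ evaluated at arguments of size up to $M$ and $N$, so one needs the scaled errors $M(\widehat{\alpha}-\alpha^0),\dots,MN(\widehat{\mu}-\mu^0)$ to vanish, exactly as your expansion requires --- your treatment here essentially reproduces what the paper's Appendix B does for the asymptotic normality of the amplitude estimators, and is the more defensible route. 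One quibble: your concern that the naive bound on $\widehat{\mu}-\mu^0$ is too weak when $N\gg M$ is slightly misplaced. Granting a uniform $o(1/M)$ (resp.\ $o(1/N)$) rate, the cofactor computation already yields $MN(\widehat{\mu}-\mu^0)=o\bigl(N^3/(N^3+M^3)\bigr)+o\bigl(M^3/(N^3+M^3)\bigr)=o(1)$ without any cancellation argument; the genuine difficulty in the unbalanced regime is rather obtaining the uniform rate over $N$ columns from only $M$ observations each, which is where your maximal inequality must carry the load.
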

\begin{proof}
	Please see \nameref{AppA} for the proof.
\end{proof}
\begin{theorem}\label{thm_2}
Under assumptions 1 and 2, the proposed   	estimators of $\mathrm{\bm{\theta}}^0$ is  asymptotically normally distributed. 	$$\bm{D}^{-1}(\widetilde{ \mathrm{\bm{\theta}}} -  \mathrm{\bm{\theta}}^0) \xrightarrow{d} \mathcal{ N}_7(0,\displaystyle \bm{\Sigma})\mbox{\hspace{10pt} as } M=N\rightarrow\infty , $$ where $c = \displaystyle{\sum_{i=-\infty}^{\infty}\sum_{j=-\infty}^{\infty}}a(i,j)^2$,  \hspace{2pt} $\bm{D}^{-1} = diag(M^{\frac{1}{2}}N^{\frac{1}{2}}, \hspace{5pt}M^\frac{1}{2}N^{\frac{1}{2}}, \hspace{5pt}
	M^{\frac{3}{2}}N^{\frac{1}{2}}, \hspace{5pt}M^{\frac{5}{2}}N^{\frac{1}{2}}, \hspace{5pt}M^{\frac{1}{2}}N^{\frac{3}{2}}, \hspace{5pt}M^{\frac{1}{2}}N^{\frac{5}{2}}, \hspace{5pt}M^{\frac{3}{2}}N^{\frac{3}{2}} \hspace{5pt} )$, and \\
	\begin{equation*}
		\bm{\Sigma}=
		\cfrac{c\sigma^2}{(A^{0^2}+B^{0^2})}\begin{bmatrix}
			2A^{0^2}+187B^{0^2}&-185A^0B^0&-378B^0&60B^0&-378B^0&60B^0&612B^0\\
			-185A^0B^0&2B^{0^2}+187A^{0^2}&	378A^0&-60A^0&378A^0&-60A^0&-612A^0\\
			-378B^0	&378A^0	&996&-360&612&0&-1224\\
			60B^0&-60A^0	&-360&360&0&0&0\\
			-378B^0	&378A^0	&612&0&996&-360&-1224\\
			60B^0&-60A^0	&0&0&-360&360&0\\
			612B^0	&-612A^0	&-1224&0&-1224&0&2448
		\end{bmatrix}.	
	\end{equation*}
	Here  $diag(a_1, a_2, ..., a_k)$ represents  $k\times k$ diagonal matrix with elements $a_1, a_2, ..., a_k$ in the principal diagonal and $\mathcal{ N}_k(\bm{\mathcal{M}},\bm{\mathcal{S}})$ represents the $k$-variate normal distribution with mean vector $\bm{\mathcal{M}}$ and variance-covariance matrix $\bm{\mathcal{S}}$. 
\end{theorem}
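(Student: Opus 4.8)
The plan is to trace the asymptotic fluctuations of the final estimator back to those of the $M+N$ one-dimensional chirp LSEs from which it is built, and then propagate them through the (linear) combination steps of Section~\ref{methd}. Fix a column index $n_0$. By \eqref{decomp_1}, the pair $(\widehat{\alpha}_{n_0},\widehat{\beta}_{n_0})$ from \eqref{express_alp_bet} is the LSE of the nonlinear parameters of a 1D chirp with effective frequency $\psi^0_{n_0}:=\alpha^0+n_0\mu^0$, chirp rate $\beta^0$, and amplitudes $(A^0(n_0),B^0(n_0))$; since $(A^0(n_0),B^0(n_0))$ is a rotation of $(A^0,B^0)$, one has $A^0(n_0)^2+B^0(n_0)^2=A^{0^2}+B^{0^2}$ for every $n_0$, so all column and row sub-problems carry the same effective signal strength. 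Using the strong consistency of Theorem~\ref{thm_1} to localise the minimiser, I would write the normal equations $\nabla R_{Mn_0}(\widehat{\alpha}_{n_0},\widehat{\beta}_{n_0})=\bm 0$, expand them to second order about $(\psi^0_{n_0},\beta^0)$, and check that the scaled Hessian $\mathrm{diag}(M^{-3/2},M^{-5/2})\,\nabla^2R_{Mn_0}\,\mathrm{diag}(M^{-3/2},M^{-5/2})$ converges to a fixed invertible matrix, while the scaled gradient at the truth, $\mathrm{diag}(M^{-3/2},M^{-5/2})\,\nabla R_{Mn_0}(\psi^0_{n_0},\beta^0)$, has leading term linear in $\{X(m,n_0):m=1,\dots,M\}$. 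This yields
\[
\widehat{\alpha}_{n_0}-\psi^0_{n_0}=\frac{1}{M^{3/2}}\,\ell^{(1)}_{M,n_0}(X)+o_p\!\big(M^{-3/2}\big),\qquad
\widehat{\beta}_{n_0}-\beta^0=\frac{1}{M^{5/2}}\,\ell^{(2)}_{M,n_0}(X)+o_p\!\big(M^{-5/2}\big),
\]
with $\ell^{(1)}_{M,n_0},\ell^{(2)}_{M,n_0}$ linear in the noise column $X(\cdot,n_0)$, and analogously for $(\widehat{\gamma}_{m_0},\widehat{\delta}_{m_0})$ from \eqref{express_gam_delt} in terms of the noise row $X(m_0,\cdot)$; this is essentially the 1D theory underlying \cite{Grover_efficient}.

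Next I would assemble the final estimator. Since $\bm{\Gamma}(\alpha^0,\gamma^0,\mu^0)^\top$ reproduces exactly the signal part $(\psi^0_1,\dots,\psi^0_N,\gamma^0+\mu^0,\dots,\gamma^0+M\mu^0)^\top$ of $\bm{\Lambda}$, the residual in \eqref{lin_eq} is precisely the stacked vector of the per-chirp noise functionals, so
\[
\big(\widehat{\alpha}-\alpha^0,\ \widehat{\gamma}-\gamma^0,\ \widehat{\mu}-\mu^0\big)^\top=\big(\bm{\Gamma}^\top\bm{\Gamma}\big)^{-1}\bm{\Gamma}^\top\Big(\bm{\Lambda}-\bm{\Gamma}(\alpha^0,\gamma^0,\mu^0)^\top\Big),
\]
and Step~2 gives $\widehat{\beta}-\beta^0=N^{-1}\sum_{n}(\widehat{\beta}_n-\beta^0)$, $\widehat{\delta}-\delta^0=M^{-1}\sum_{m}(\widehat{\delta}_m-\delta^0)$. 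Hence each of the first five coordinates of $\bm D^{-1}(\widetilde{\bm\theta}-\bm\theta^0)$ is, up to $o_p(1)$, a single linear functional $\sum_{m,n}c_{mn}X(m,n)$ of the whole noise array, with coefficients read off from the column/row functionals above. For the amplitudes I would expand $\cos\widehat\phi$ and $\sin\widehat\phi$ about the true phase $\phi^0(m,n)=\alpha^0m+\beta^0m^2+\gamma^0n+\delta^0n^2+\mu^0mn$; the crucial point is that the errors $\widehat{\alpha}-\alpha^0,\dots,\widehat{\mu}-\mu^0$ enter $\widehat\phi-\phi^0$ multiplied by $m,m^2,\dots,mn$, and after the $(MN)^{-1}$ averaging the $B^0\sin^2\phi^0$-type terms (whose secular part does not oscillate away) contribute at the same $M^{-1/2}N^{-1/2}$ order as the direct term $\tfrac2{MN}\sum X(m,n)\cos\phi^0(m,n)$, while the $A^0$-weighted products are purely oscillatory and vanish — which is exactly why the $(A^0,B^0)$ block of $\bm\Sigma$ carries the large $B^{0^2}$-coefficients rather than the $O(c\sigma^2)$ value a true LSE would attain. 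One also verifies that the quadratic remainder $\tfrac1{MN}\sum(\widehat\phi-\phi^0)^2$ and the cross terms between the noise and the nonlinear errors are $o_p((MN)^{-1/2})$. Thus $\widehat A-A^0$ and $\widehat B-B^0$ are again linear functionals of $\{X(m,n)\}$, with coefficients that now mix in the nonlinear-parameter functionals.

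With $\bm D^{-1}(\widetilde{\bm\theta}-\bm\theta^0)=\sum_{m,n}\bm c_{mn}X(m,n)+o_p(1)$ in hand, I would substitute \eqref{err_1}, reindex (the rearrangement being legitimate by \eqref{err_2}) to obtain $\sum_{m,n}\bm d_{mn}\epsilon(m,n)+o_p(1)$, and apply a central limit theorem for weighted sums of i.i.d.\ variables: the Lindeberg--Lyapunov condition follows from the finite fourth moment in Assumption~\ref{asp_1} together with $\max_{m,n}\|\bm d_{mn}\|^2\big/\sum_{m,n}\|\bm d_{mn}\|^2\to0$, and the limiting covariance $\bm\Sigma$ is obtained by evaluating the relevant trigonometric sums, the scalar $c=\sum_{i,j}a(i,j)^2$ appearing because the chirp frequencies sweep the whole band, so the effective spectral density is its average. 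I expect the two main obstacles to be: (i) showing every Taylor remainder is genuinely $o_p(1)$ at the sharp rates in $\bm D^{-1}$, most delicately in the amplitude step where the leading term and the nonlinear-error contribution are of the same order and must be retained together; and (ii) the explicit evaluation of the $7\times7$ matrix $\bm\Sigma$, in particular the nonzero cross-covariances between the column-built estimators of $(\alpha^0,\beta^0)$ and the row-built estimators of $(\gamma^0,\delta^0)$, which appear because the Step~1 regression blends all $M+N$ one-dimensional frequency estimates; the assumption $M=N\to\infty$ is used to keep the two families of rates commensurate.
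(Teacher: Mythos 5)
Your proposal is correct and follows essentially the same route as the paper: linearise each column/row 1D chirp LSE via a Taylor expansion of the normal equations, propagate the resulting noise-linear functionals through the $\bm{\Gamma}$-regression and the averages of Steps 1--2, expand $\cos\widehat\phi$, $\sin\widehat\phi$ about $\phi^0$ for the amplitudes (retaining the non-oscillatory $B^0$-weighted terms, which is exactly how the paper arrives at its equation \eqref{final_expression_amplitude} and the inflated $B^{0^2}$ entries of $\bm\Sigma$), and finish with a central limit theorem for the stationary linear process. The only cosmetic differences are that you work with the concentrated criterion $R_{Mn_0}$ rather than the full four-parameter sum of squares $Q_{n_0}$, and you verify a Lindeberg condition directly instead of citing the stationary-process CLT of \cite{Fuller_1976}.
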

\begin{proof}
Please see \nameref{AppB} for the proof.
\end{proof}
Although Theorem \ref{thm_2} has been proved for the increasing sample size assuming $M=N\rightarrow \infty$, however asymptotic normality  will still hold even if $M/N\rightarrow p $ as $M,N\rightarrow \infty$, for some $p>0$. It is interesting to note that the asymptotic properties of the proposed estimators of chirp rates $\beta^0,\delta^0 $ will remain the same even if we take \(	\min\{M,N\}\xrightarrow{}\infty\).  The asymptotic variance-covariance matrix of  $(\alpha^0,\gamma^0,\mu^0)$ will however change  depending on the value of $p$ among non-linear parameters. \\~\\
If we further assume that the errors in \eqref{model_2D} are i.i.d. Gaussian distributed random variables, then it can be observed that the proposed estimators of chirp rates parameters $\beta^0$ and $\delta^0 $, asymptotically attain Cramer-Rao lower bound (CRLB). CRLB for estimators of other  non-linear parameters $\alpha^0$, $\gamma^0$, and $\mu^0$ are \(
	\cfrac{456c\sigma^2}{(A^{0^2}+B^{0^2})}\), \hspace{5pt} \(	\cfrac{456c\sigma^2}{(A^{0^2}+B^{0^2})}\)  \hspace{5pt}  and  \hspace{5pt}  \( 	\cfrac{288c\sigma^2}{(A^{0^2}+B^{0^2})}
\), see \cite{Lahiri_2017}.
\section{Simulation Results}\label{simul}
Simulation studies done in this paper are divided into three parts. The first part demonstrates the evaluation of finite sample size performance of proposed  estimators. We  compare the performance of the proposed estimators with  the  asymptotically optimal estimators such as LSEs, and ALSEs, and fast but sub-optimal 2D-multilag-HAF estimators. The second part shows the lower computational cost of the proposed estimators as compared to the LSEs. Finally, the third part exemplifies the ability of the proposed estimators to  extract original gray-scale texture from one contaminated with noise and reproduce the original texture.   We have performed simulations on the complex counterpart of the model \eqref{model_2D}, (see \cite{Barbaross_2014}) for comparison purposes. 
\subsection{Finite Sample Performance}
To provide a detailed assessment of the performance of proposed estimators, we have chosen sample sizes $M=N=20,40,60,80$ and 100. The fixed values of all parameters to obtain complex-valued chirp data are: 	\begin{equation} \label{par_choice}
		A^0= 1, \hspace{7pt}\alpha^0 =0.4 , \hspace{12pt}\beta^0=0.1429, \hspace{7pt}\gamma^0 =0.25 ,\hspace{7pt} \delta^0 = 0.1250, \hspace{7pt}\mu^0 =  0.1667.
\end{equation}                       
 Obtained data from the model  is then contaminated with noise $X(m,n)$. We consider two distinct noise structures for our simulations. These are:
\begin{itemize}
	\item Independently and identically distributed (i.i.d.) normal errors with mean 0 and variance $\sigma^2$;
	\item Autoregressive moving average (ARMA) errors with following representation:
	\begin{align}\label{err_cas_2}
		X(m,n) =& 0.06X(m-1,n-1)-0.054X(m,n-1)+0.087X(m-1,n)+\nonumber \\&\epsilon(m,n)+.01\epsilon(m-1,n-1)+0.035\epsilon(m,n-1)+0.042\epsilon(m-1,n).
	\end{align}
	where $\epsilon(m,n)$ is a sequence of i.i.d. Gaussian random variables with mean 0 and variance $\sigma^2$.  
\end{itemize}  
       We have obtained the estimates for 1000 replications  for a fixed sample size $M=N$, under a particular error structure with fixed $\sigma^2$.
         The estimators do not have any explicit closed form expression, so we  use  Nelder–Mead algorithm ( using ``optim"  function in R software) for optimization of the objective function and to obtain the estimators. Mean square errors (MSEs) obtained under 1000 replications are displayed in the figures \ref{fig:a1}, \ref{fig:b1}, \ref{fig:a2} and \ref{fig:b2} for four different values of $\sigma$  = 0.1,0.5,0.9,1. The MSEs are plotted on negative logarithmic scale.   The findings of  these simulation results  can be summarized as follows:
        \begin{itemize}
        \item MSEs   of the proposed estimators decrease rapidly as sample size increases, which supports the consistency property of the estimators.   Further, as sample size increases, the gap between the  MSEs of the proposed estimators and LSEs decreases.
        \item The obtained MSEs of proposed estimators of $\beta^0$ and $\delta^0$,   are at par with those of LSEs and ALSEs.
      
        \end{itemize}
    \subsection{Time Comparison}
    The computational advantage of the proposed estimators over the conventional LSEs is quite significant. In order to compare the two methods, we measure their computational complexities in terms of the number of points in the grid that are needed to find the initial guesses of these estimators. Once we have the precise initial guesses, applying an iterative algorithm like Nelder-Mead is a matter of seconds.  ``gridSearch" function from the package ``NMOF" is used  to calculate the initial guesses.  We report observed time to get the estimates for a fixed sample size and the total number of grid points over which cost function evaluations are required, in table \ref{tim_comp_1}. The choice of parameters is same as in \eqref{par_choice} along-with i.i.d. normal errors with mean 0 and standard deviation $\sigma=0.9$. For a fixed sample size $M=N$, the order of computation for LSEs is $M^4N^4=M^8$, but for the proposed method, the order of computation is $M^3N+N^3M=2M^4$.
    The numerical experiments for comparing time efficiency were performed on a system with processor: Intel(R) Core(TM) i3-5005U CPU @ 2.00GHz 2.00GHz; installed memory(RAM): 4.00 GB; and system type: 64-bit Operating System. Codes were written and run in R version 4.0.4 (2021-02-15) -- ``Lost Library Book", a free software environment for statistical computing and graphics. 
    \FloatBarrier
			\begin{figure}[]
	\centering
	\subfloat[Standard deviation of $\epsilon(m,n)$ is $\sigma=0.1$ ]{\label{fig:a1}{\includegraphics[width=0.9\linewidth]{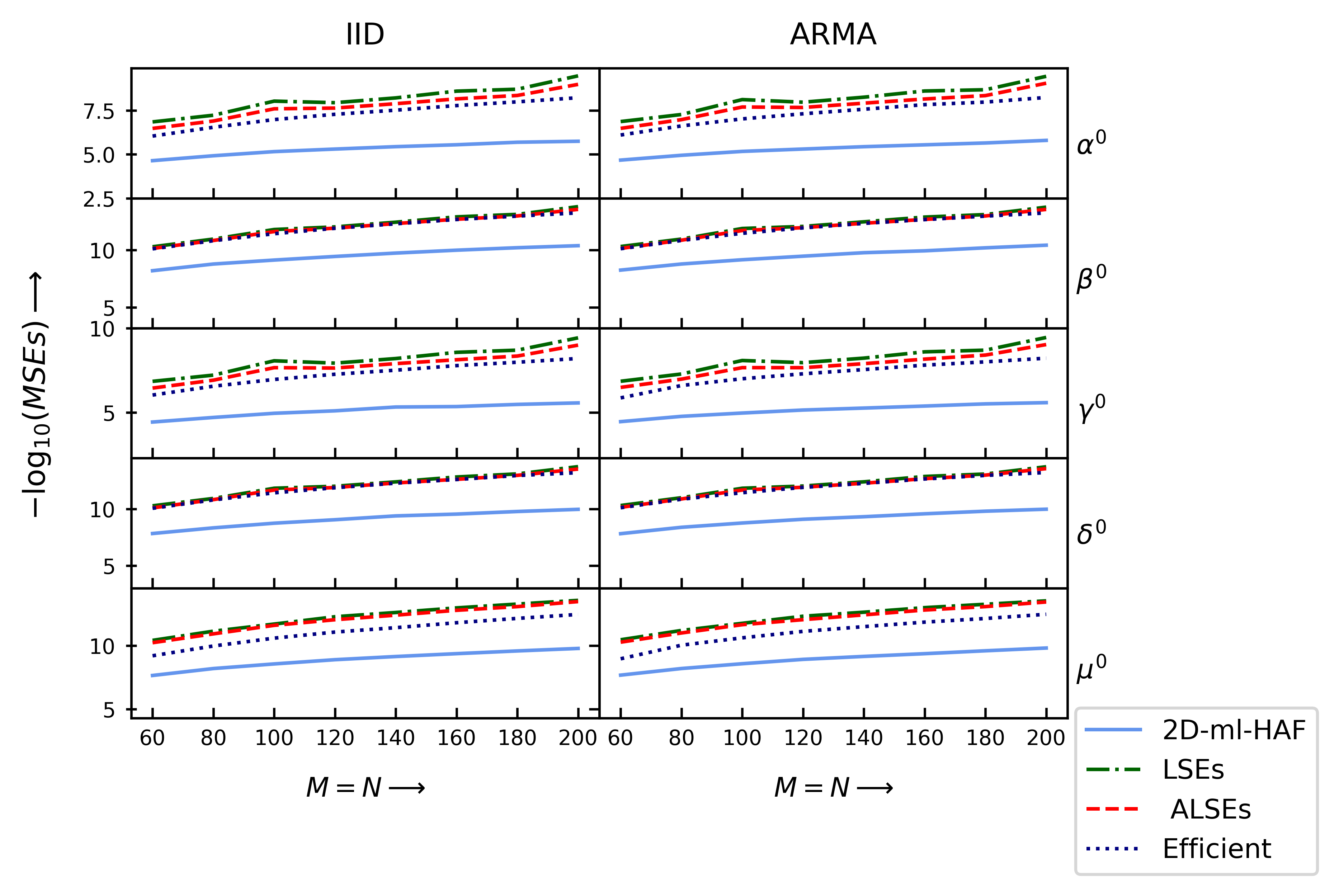}}}\\
	\subfloat[ Standard deviation of $\epsilon(m,n)$ is $\sigma=0.5$ ]{\label{fig:b1}{\includegraphics[width=0.9\textwidth]{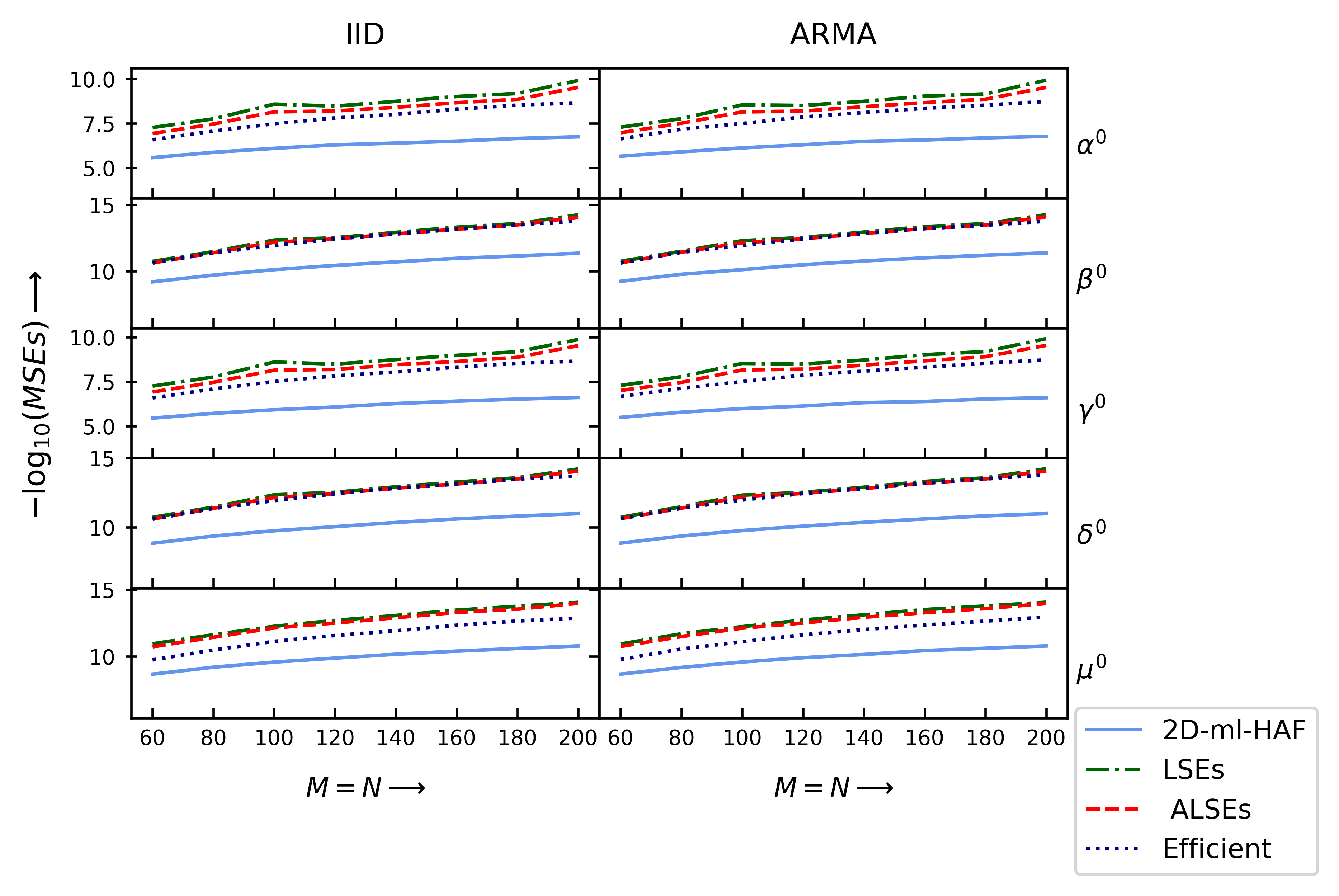}}}
	\label{fig:myfig1}
	\caption{Plots of $-\log(\mbox{MSEs})$ versus the sample size for estimators of non-linear parameters.}
\end{figure}
\newpage
\begin{figure}[]
	\centering
	\subfloat[Standard deviation of $\epsilon(m,n)$ is $\sigma=0.9$ ]{\label{fig:a2}{\includegraphics[width=0.9\linewidth]{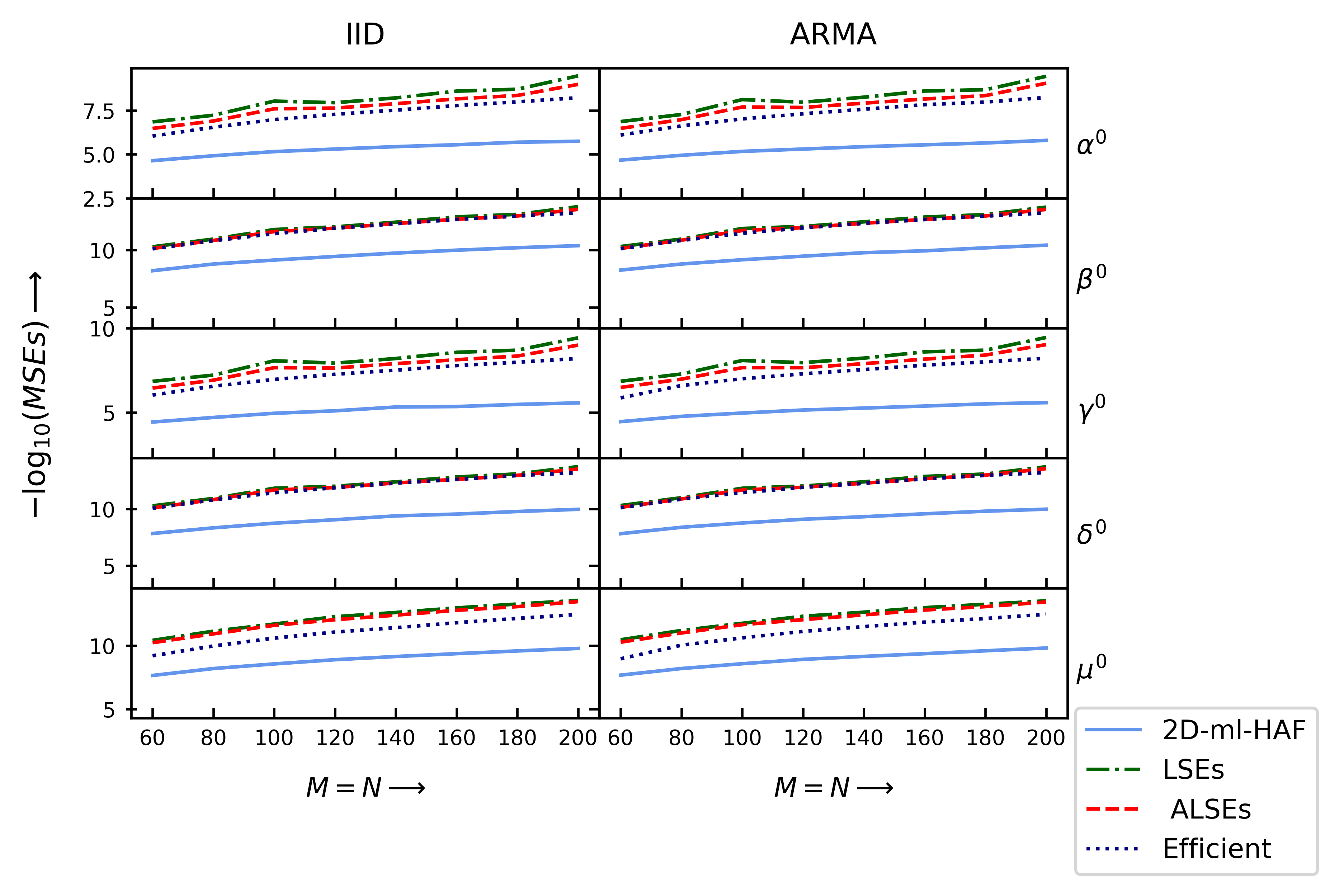}}}\\
	\subfloat[ Standard deviation of $\epsilon(m,n)$ is $\sigma=1$ ]{\label{fig:b2}{\includegraphics[width=0.9\textwidth]{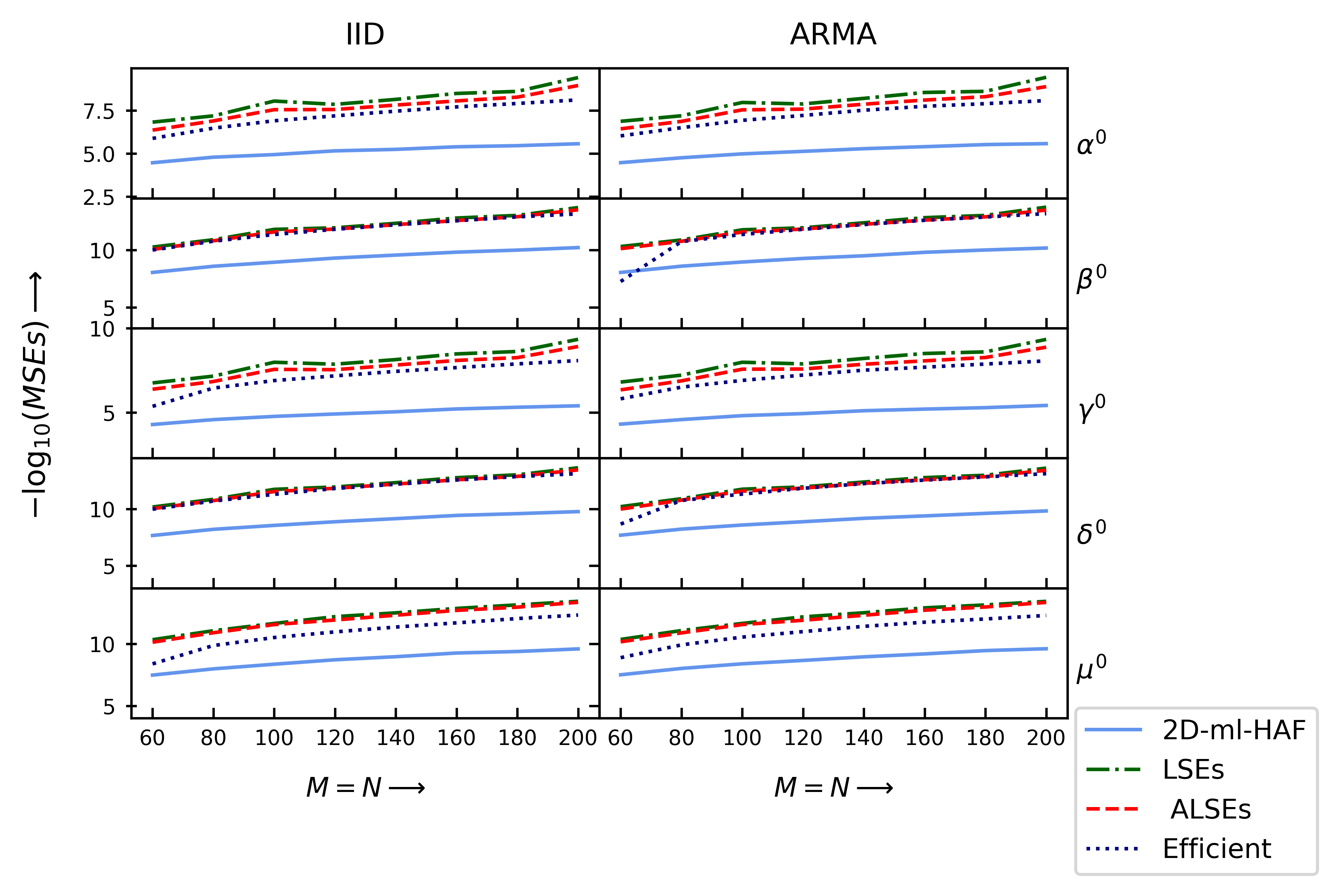}}}
	\label{fig:myfig2}
		\caption{Plots of $-\log(\mbox{MSEs})$ versus the sample size for estimators of non-linear parameters.}
\end{figure}
\FloatBarrier
 
\FloatBarrier
\begin{table}[h!]

	\footnotesize
	\caption{Time and number of grid points taken to compute the estimates }\label{tim_comp_1}
	\begin{center}
\scalebox{0.9}{	\begin{tabular}{|c||c|c||c|c|}

\hline					
\multicolumn{1}{|c||}{Sample Size}	&\multicolumn{2}{c||}{	Efficient }	&\multicolumn{2}{c|}{LSEs}	\\ \hline
$M=N$&Time(seconds)& Total No. of grid points&Time(seconds)& Number of grid points\\ \hline
2	&	0.044	&	12	&	0.128	&	27	\\ \hline
3	&	0.041	&	96	&	0.308	&	2048	\\ \hline
4	&	0.069	&	360	&	4.101	&	30375	\\ \hline
5	&	0.119	&	960	&	34.609	&	221184	\\ \hline
6	&	0.173	&	2100	&	198.038	&	1071875	\\ \hline
7	&	0.250	&	4032	&	856.375	&	3981312\\	\hline \hline
	\end{tabular}
}
\end{center}

\end{table}
\FloatBarrier
For the considered machine, it was not feasible to perform grid-search to get LSEs for more than  $M=N=7$. When we plot sample size $M(=N)$ against the time to compute initial guess for  LSEs, then it is observed to be linear. So, to get an idea of the time deviation of LSEs with that of proposed estimates at larger sample sizes,  we predict time to obtain LSEs based on grid-search by fitting a simple linear regression model between $\log$ of sample size and $\log$ of time to get LSEs. From the results in Table \ref{tab_pred}, we can clearly observe the massive time difference of getting proposed estimates and LSEs. For example, if we go for sample size, say $M=N=40$, then it will take more than 20 years to obtain LSEs using grid-search over the same machine (even if we assume a large amount of RAM in a machine), while it took less than 10 minutes to obtain the proposed estimates.
\FloatBarrier
\begin{table}[h!]
	\footnotesize
	
	\caption{Comparing Time efficiency with Predicted time for LSEs }\label{tab_pred}
	\begin{center}
	\scalebox{0.9}{	\begin{tabular}{|c||c|c||c|c|}

			\hline					
			\multicolumn{1}{|c||}{Sample Size}	&\multicolumn{2}{c||}{	Efficient }	&\multicolumn{2}{c|}{LSEs}	\\ \hline
			$M=N$& Computed Time & Total No. of grid points& Predicted Time & Number of grid points\\ \hline
8	&	0.530	sec	&	7.06E+03	&	32.165	min	&	1.23E+07		\\ \hline
9	&	0.734	sec	&	1.15E+04	&	1.375	hr	&	3.28E+07		\\ \hline
10	&	1.545	sec	&	1.78E+04	&	3.195	hr	&	7.86E+07		\\ \hline
15	&	5.831	sec	&	9.41E+04	&	3.411	days	&	2.20E+09		\\ \hline
20	&	26.479	sec	&	3.03E+05	&	34.070	days	&	2.29E+10		\\ \hline
25	&	1.015	min	&	7.49E+05	&	203.051	days	&	1.40E+11		\\ \hline
30	&	2.443	min	&	1.56E+06	&	2.392	yr	&	6.11E+11		\\ \hline
35	&	5.369	min	&	2.91E+06	&	8.209	yr	&	2.12E+12		\\ \hline
40	&	8.525	min	&	4.99E+06	&	23.888	yr	&	6.22E+12		\\ \hline
45	&	14.530	min	&	8.02E+06	&	61.288	yr	&	1.61E+13		\\ \hline
50	&	24.243	min	&	1.22E+07	&	142.37	yr	&	3.75E+13		\\ \hline \hline

		\end{tabular}
	}
	\end{center}
\end{table}
\FloatBarrier
\subsection{ Texture Pattern Estimation}
2D-chirp signals create interesting gray-scale texture patterns. In order to analyze the effectiveness of the proposed  estimators for estimating texture patterns accurately, we  generate data from the complex counterpart of the model with same set of parameters as in  \eqref{par_choice}. Then  real and imaginary part of the obtained data is contaminated independently with  i.i.d. normal errors having mean 0 and variance $\sigma^2=0.09$. The data matrix obtained is of size $100\times 100$. We analyze this data using the proposed  estimator, the optimal LSEs, ALSEs, and also with the 2D-multilag-HAF method. Note that we have used true values as initial guess for obtaining optimal estimators, LSEs and ALSEs because of the computational complexity and grid-search method for obtaining 2D-multilag-HAF estimators and proposed estimators. \\~\\
Plugging these estimators in the deterministic part of the model, we get  estimated texture patterns as the real part of the reproduced data. We also present  real part of the original dataset to compare the original and estimated textures. It is clear from the figures that texture pattern obtained using the proposed method is visually the same as that obtained using optimal LSEs and ALSEs, while the 2D-multilag-HAF estimator gives a slightly different pattern than the original one.

	\begin{figure}[htbp!]
	\centering
	\subfloat[Contaminated Texture
	]{\label{fig:a}{\includegraphics[width=0.22\linewidth]{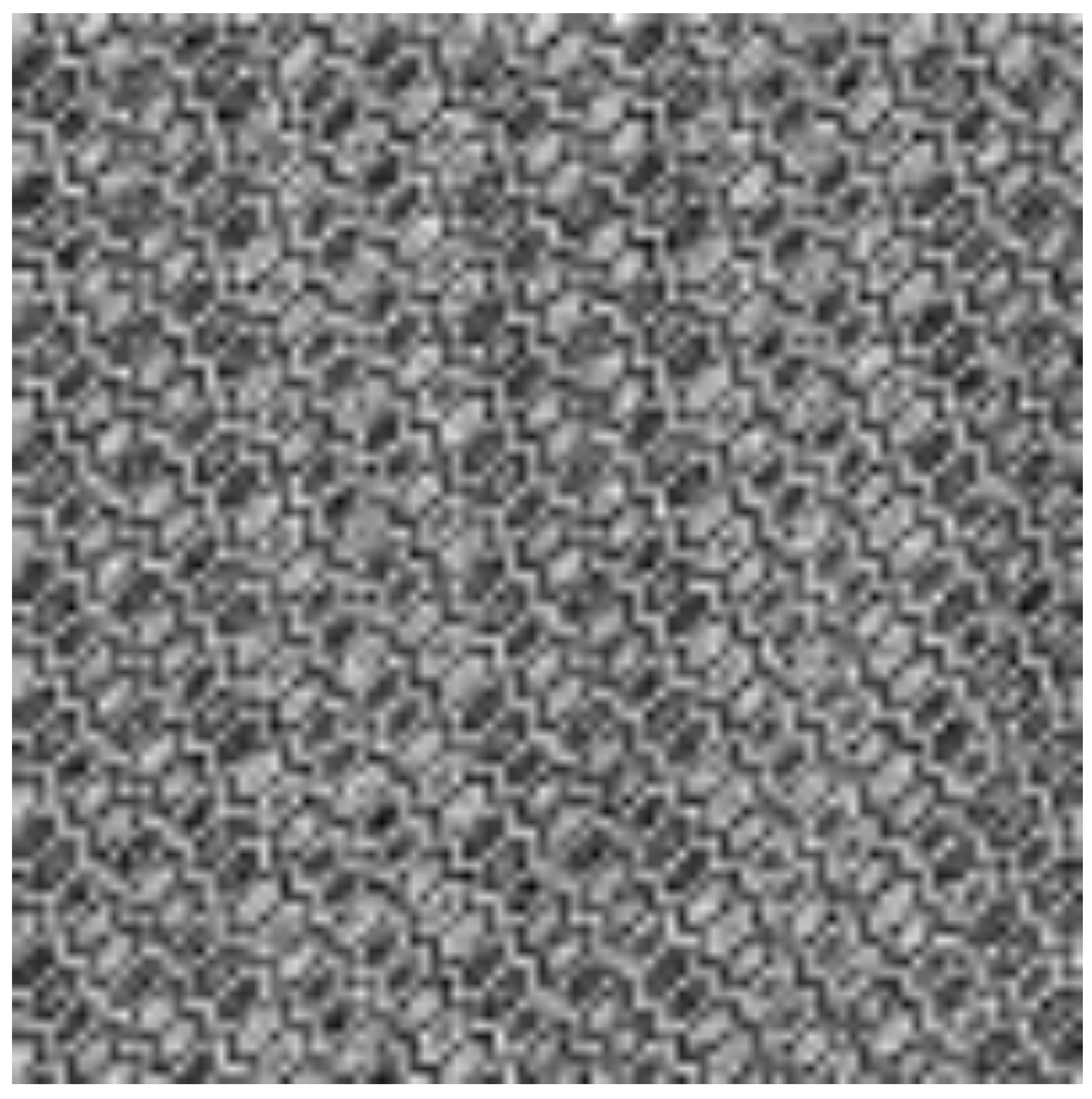}}}\quad\quad
	\subfloat[Original Texture ]{\label{fig:b}{\includegraphics[width=0.22\linewidth]{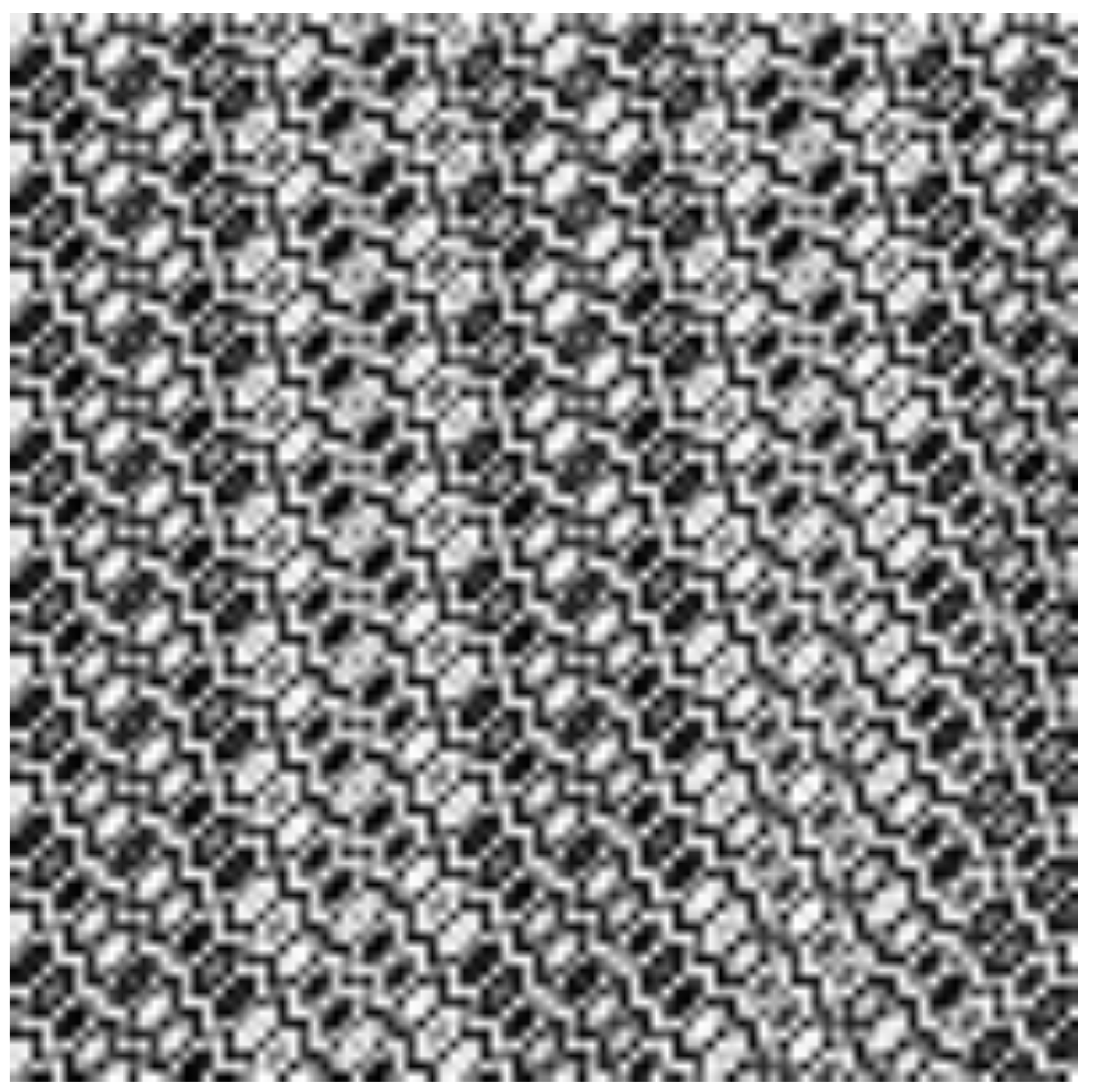}}}\quad\quad 
	\subfloat[Texture estimated \\\text{}\hspace{15pt}using LSEs]{\label{fig:c}{\includegraphics[width=0.22\linewidth]{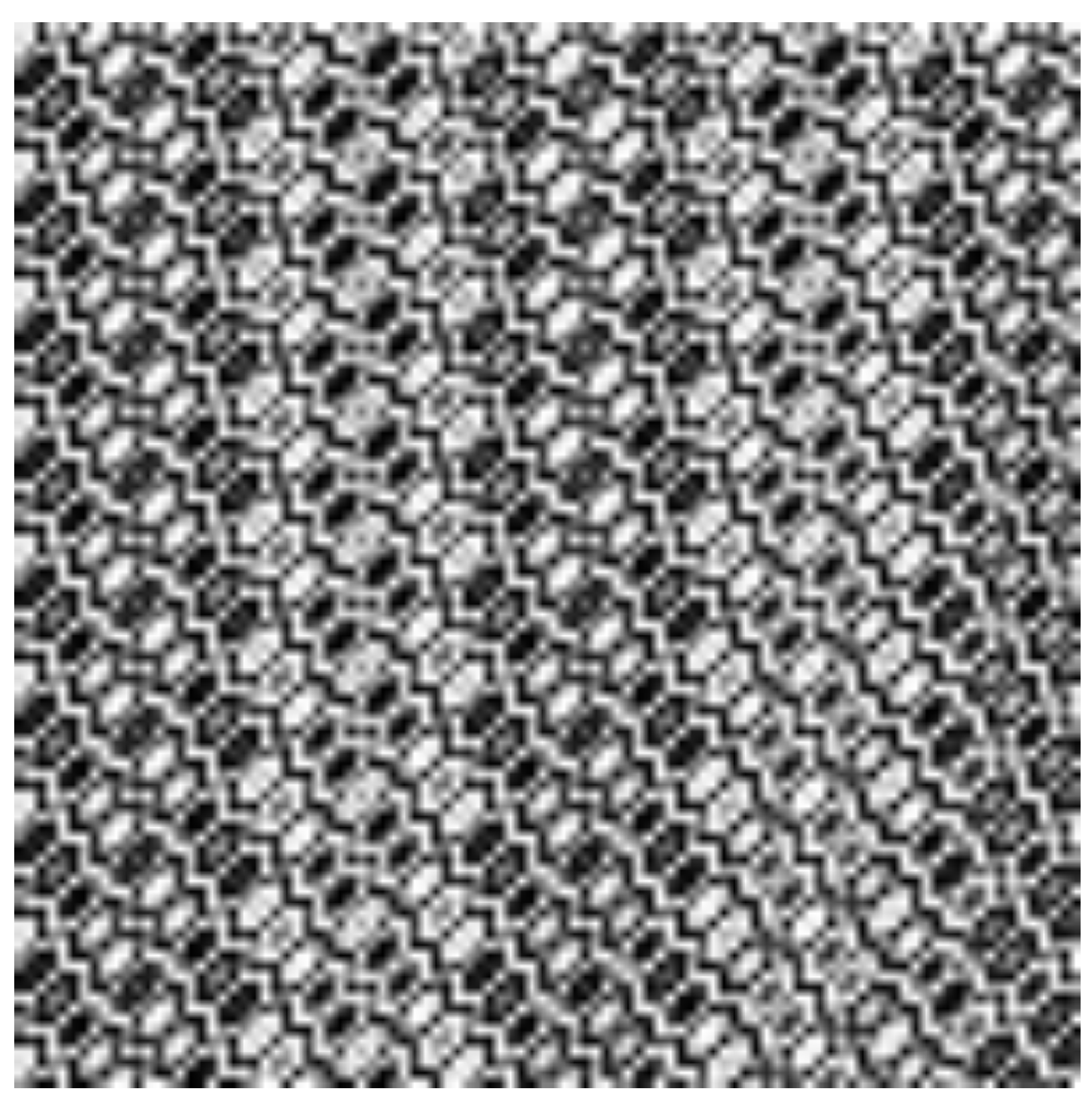}}}\\
		\subfloat[Texture estimated \\\text{}\hspace{15pt}using ALSEs]{\label{fig:d}{\includegraphics[width=0.22\linewidth]{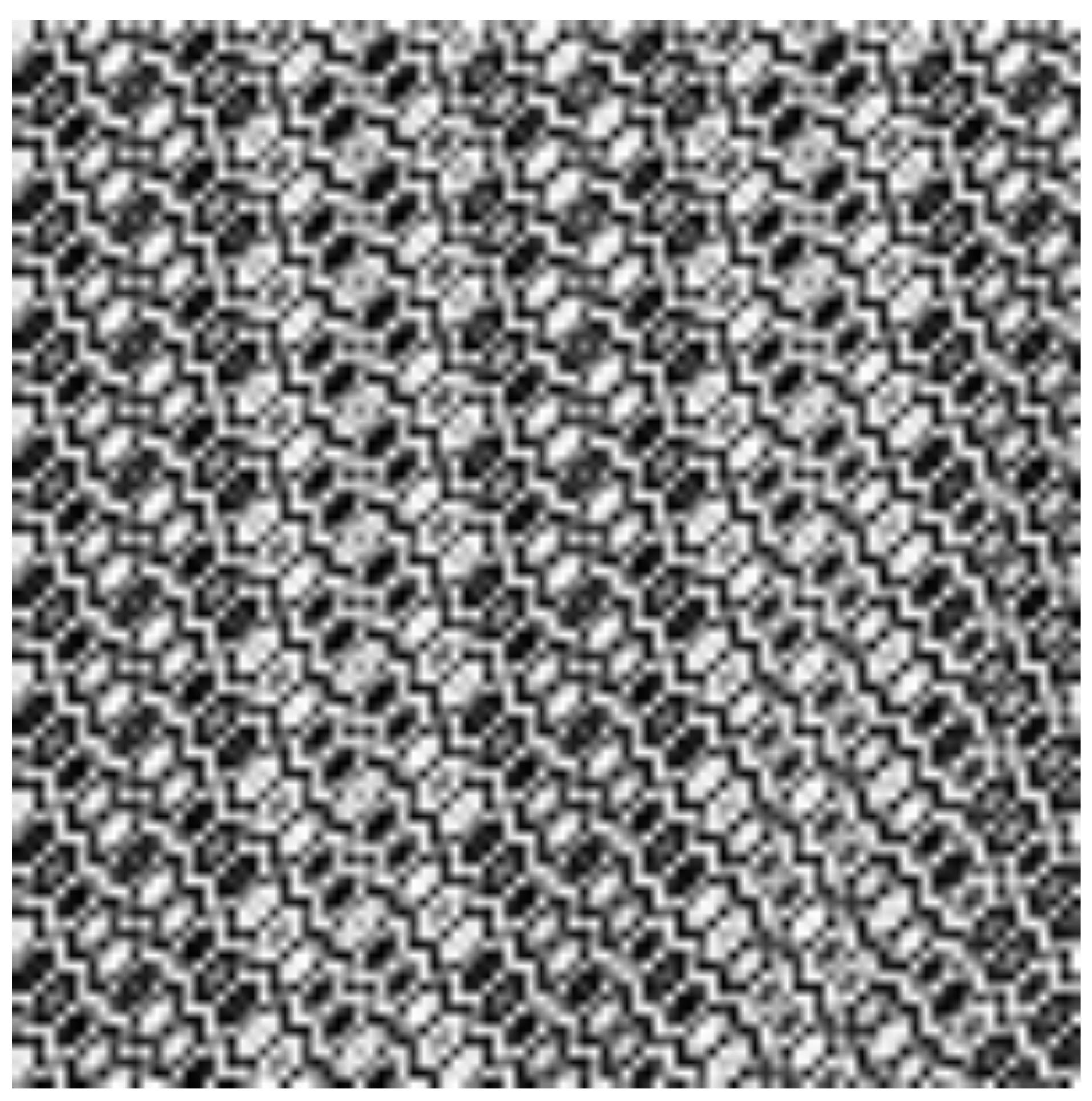}}}\quad\quad
			\subfloat[Texture estimated \\\text{}\hspace{10pt} using  2D-multilag-HAF\\\text{}\hspace{10pt} method]{\label{fig:e}{\includegraphics[width=0.22\linewidth]{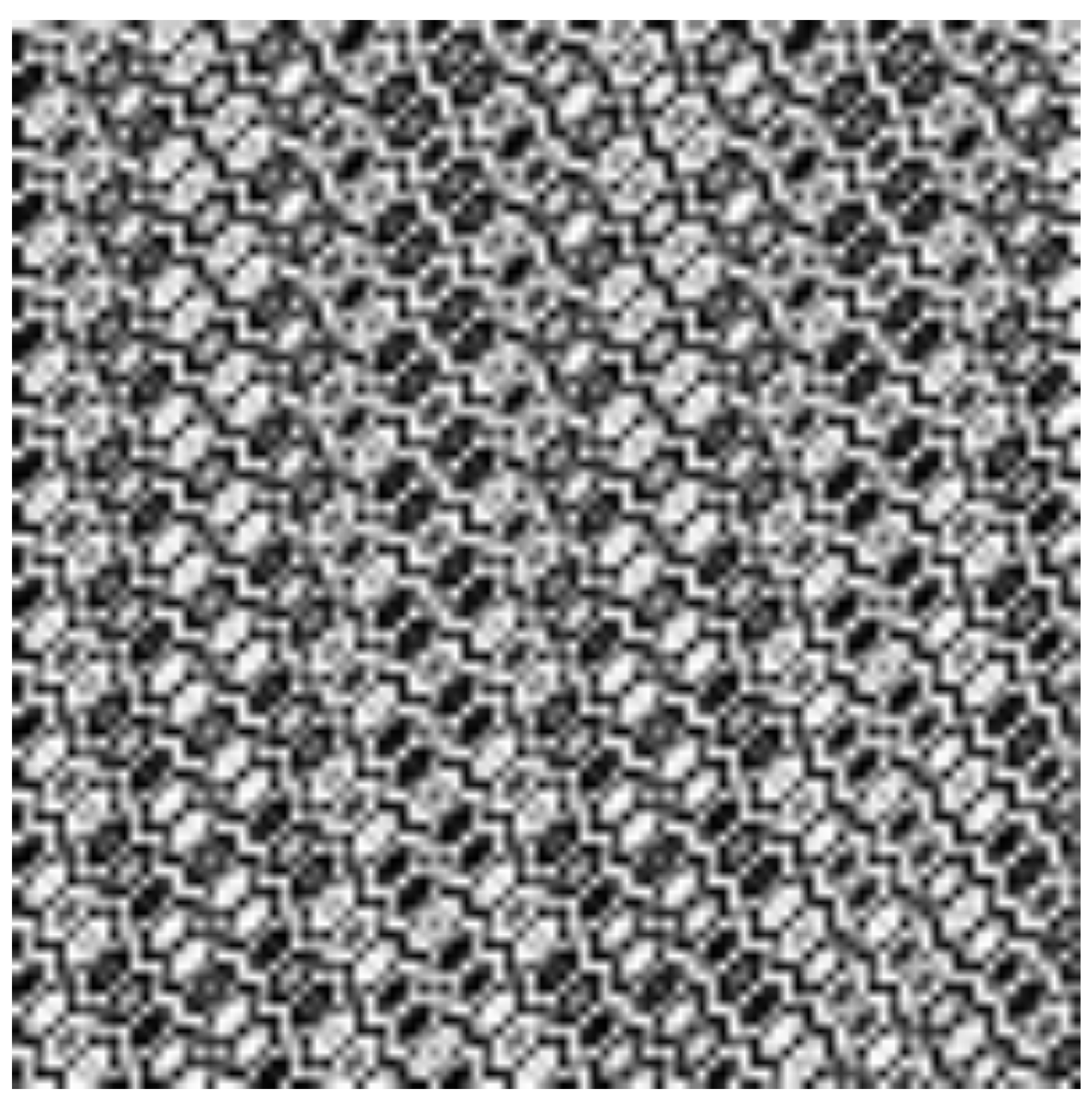}}}\quad\quad
	\subfloat[Texture estimated\\\text{}using proposed method]{\label{fig:f}{\includegraphics[width=0.22\linewidth]{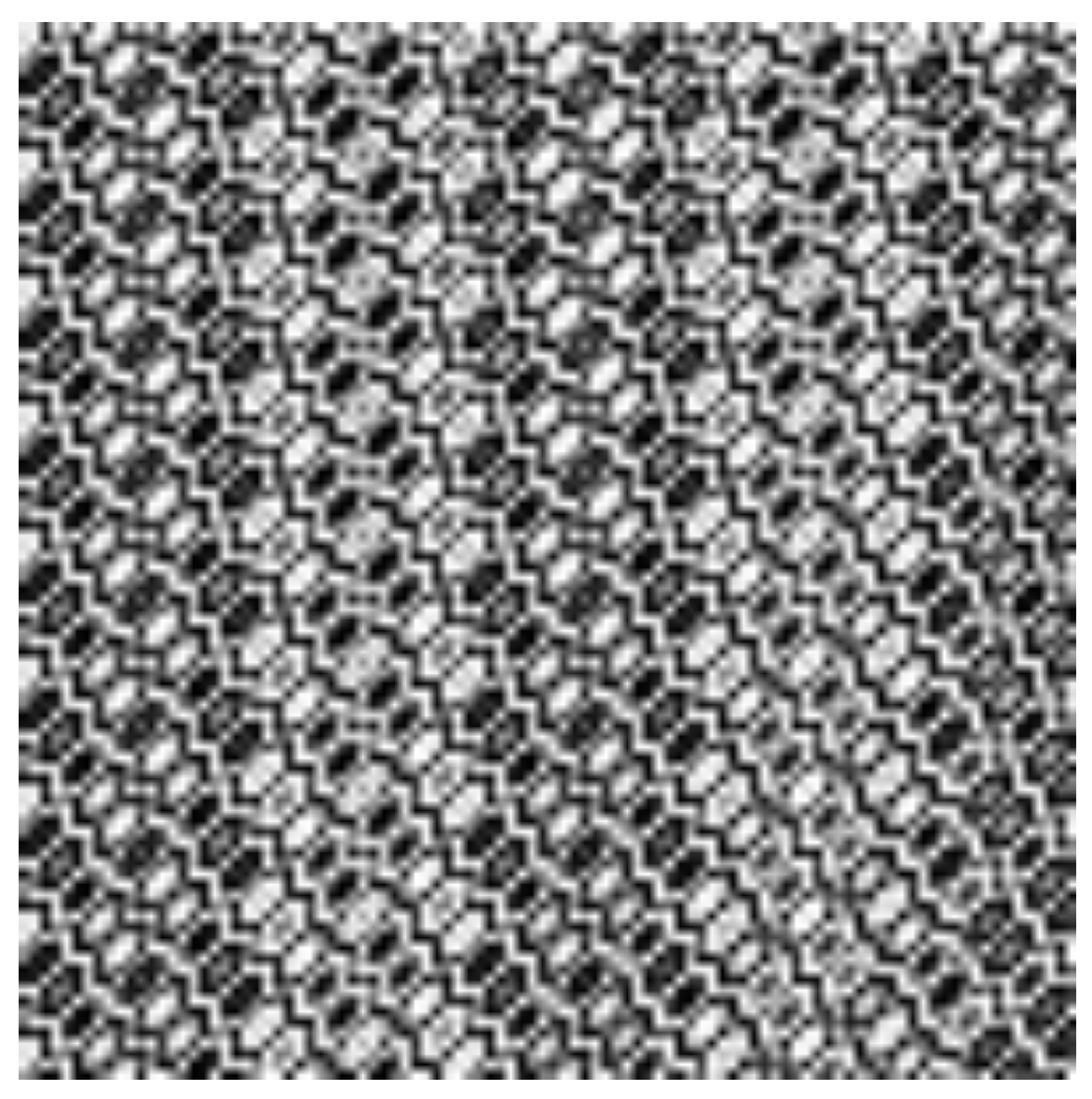}}}%
	\caption{A comparison of estimated textures using LSEs, ALSEs, 2D-multilag-HAF and the proposed estimators.}
	\label{fig:myfig}
\end{figure}

\section{Conclusion}\label{conclu}
The paper proposes a computationally efficient method that produces estimators having the same rate of convergence as the LSEs or ALSEs. The key idea is to develop a strategy by disintegrating the 2D model into several 1D chirp models and then design an optimal estimation method to obtain estimates of the 2D model. We believe that this idea can lead to several other computationally efficient algorithms which can be used to estimate higher order polynomial phase signals’ parameters also. The proposed estimators are not only asymptotically unbiased but also have an asymptotic normal distribution and same rate of convergence as that of the LSEs.  Furthermore, the estimators converge strongly to the true value of the parameters. Extensive numerical simulations firmly support the theoretical results and also unveil the gigantic gap between time required for obtaining proposed estimates and the LSEs. Synthetic data analysis illustrates effectiveness of the proposed estimators to recover 2D gray-scale textures contaminated with noise.
\section*{Acknowledgments}
Part of the work of the third author has been	supported by a research grant from the Science and Engineering Research Board, Government of India.
\section*{Declarations}

\begin{itemize}
	\item Funding: Part of the work of the third author has been	supported by a research grant from the Science and Engineering Research Board, Government of India.
	\item Conflict of interest/Competing interests (check journal-specific guidelines for which heading to use): Not applicable.
	\item Ethics approval: Not applicable.
	\item Consent to participate: Not applicable.
	\item Consent for publication: Not applicable.
	\item Availability of data and materials: Not applicable.
	\item Code availability: The authors can share the code used for simulations, on individual requests. 
	\item Authors' contributions: All authors contributed to  study the conceptualization, methodology, review and editing. Formal analysis of proofs and simulations analysis were performed by Abhinek Shukla. The first draft of the manuscript was written by Abhinek Shukla and all authors commented on revising the previous versions of the manuscript. All authors read and approved the final manuscript. Amit Mitra and Debasis Kundu had supervised the whole research project.
\end{itemize}
\section*{Appendix A} \label{AppA}
\textbf{\large Proof of Theorem \ref{thm_1}:} Given the data matrix $\bm{Y}$, we compute the LSEs of $\alpha^0+n_0\mu^0$ and $\beta^0$ corresponding to $n_0^{th}$ column vector of $\bm{Y}$. We denote the obtained estimators by 	$\widehat{\alpha}_{n_0}$ and $\widehat{\beta}_{n_0}$ to emphasize that these depend on $n_0$. Similarly, for fixed   $m_0^{th}$ row of $\bm{Y}$, we have 
denoted the LSEs of $\gamma^0+m_0\mu^0$ and $\delta^0$ by $\widehat{\gamma}_{m_0}$ and $ \widehat{\delta}_{m_0}$.
  Under assumptions that $X(m_0,n_0)$ are stationary \eqref{err_1} and \eqref{err_2}, see \cite{Nandi_2004},  we have 
\begin{align}
	\widehat{\beta}_{n_0}&= \beta^0+o\bigg(\cfrac{1}{M^2}\bigg), \hspace{15pt}
	\widehat{\alpha}_{n_0}=\alpha^0+n_0\mu^0+o\bigg(\cfrac{1}{M}\bigg),\label{alp_bet_cons}\\
	 \widehat{\gamma}_{m_0}&= \gamma^0+m_0\mu^0+o\bigg(\cfrac{1}{N}\bigg)\label{gam_delt_cons}, \hspace{15pt}
	 \widehat{\delta}_{m_0}=\delta^0+o\bigg(\cfrac{1}{N^2}\bigg).
\end{align}
The final estimator of $\beta^0$  given by $$\widehat{\beta}= \cfrac{\displaystyle\sum_{n_0=1}^{N}\widehat{\beta}_{n_0}}{N}$$ is strongly consistent estimate of $\beta^0$ which is observed by \eqref{alp_bet_cons} and also that $	\widehat{\beta}_{n_0}$ is strongly consistent for $\beta^0$ as $M\xrightarrow{} \infty$. For proof, one may refer to \cite{Lahiri_2015}.  \\~\\ Similarly $\widehat{\delta}=\displaystyle\cfrac{1}{M}\sum_{m_0=1}^{M}\widehat{\delta}_{m_0}$ is strongly consistent estimator  of $\delta^0$.\\
Denote $\bm{\tau}^\top=\begin{bmatrix}
\undermat{\mbox{ $N$ times }}{	o\bigg(\cfrac{1}{M}\bigg)&	\ldots &	o\bigg(\cfrac{1}{M}\bigg)}& \undermat{\mbox{ $M$ times } }{	o\bigg(\cfrac{1}{N}\bigg)&	\ldots&	o\bigg(\cfrac{1}{N}\bigg)}
\end{bmatrix}_{1\times (M+N)}.$\\~\\~\\
We now prove the consistency of the frequency parameter estimators $	\widehat{\alpha},
\widehat{\gamma}$, and that of $\widehat{\mu}$, estimator of the interaction term parameter. From \eqref{alp_bet_cons} and \eqref{gam_delt_cons},  we have the following:
\begin{equation}
	\begin{bmatrix}
		\widehat{\alpha}\\l
		\widehat{\gamma}\\
		\widehat{\mu}
	\end{bmatrix} 
=\Big(\bm{\Gamma}^\top \bm{\Gamma}\Big)^{-1}\bm{\Gamma}^\top \bm{\Lambda}= (\bm{\Gamma}^\top \bm{\Gamma})^{-1}\bm{\Gamma}^\top \bigg(\bm{\Gamma} \begin{bmatrix}
		\alpha^0\\ \gamma^0\\ \mu^0
	\end{bmatrix}+\bm{\tau}\bigg),\\
\end{equation}
where $\bm{\Gamma}^\top=\begin{bmatrix}
	1&1&\cdots &1&0&0&\cdots&0\\
	0&0&\cdots&0&1&1&\cdots &1\\
	1&2&\cdots&N&1&2&\cdots&M
\end{bmatrix}_{3\times(M+N)}$, and $\bm{\Lambda}^\top =\begin{bmatrix}
\widehat{\alpha}_1&
\widehat{\alpha}_2&
\cdots&
\widehat{\alpha}_N&
\widehat{\gamma}_1&
\widehat{\gamma}_2&
\cdots&
\widehat{\gamma}_M
\end{bmatrix}$.\\~\\
This implies that 
\begin{align}\label{alp_consis}
	&\begin{bmatrix}
		\widehat{\alpha}\\
		\widehat{\gamma}\\
		\widehat{\mu}
	\end{bmatrix}= \begin{bmatrix}
		\alpha^0\\ \gamma^0\\ \mu^0
	\end{bmatrix}+(\bm{\Gamma}^\top \bm{\Gamma})^{-1}\begin{bmatrix}
		N\times o\bigg(\cfrac{1}{M}\bigg)\\
		M\times o\bigg(\cfrac{1}{N}\bigg)\\
		\cfrac{N(N+1)}{2}\times o\bigg(\cfrac{1}{M}\bigg)+\cfrac{M(M+1)}{2}\times o\bigg(\cfrac{1}{N}\bigg)
	\end{bmatrix}.
\end{align}
Now we look at the first element $a_1+a_2-a_3$ of the following matrix 
\[(\bm{\Gamma}^\top \bm{\Gamma})^{-1}\begin{bmatrix}
	N\times o\bigg(\cfrac{1}{M}\bigg)\\
	M\times o\bigg(\cfrac{1}{N}\bigg)\\
	\cfrac{N(N+1)}{2}\times o\bigg(\cfrac{1}{M}\bigg)+\cfrac{M(M+1)}{2}\times o\bigg(\cfrac{1}{N}\bigg)
\end{bmatrix}.\]\\~\\
where, 
\begin{flalign*} a_1&=\bigg(MK-\cfrac{M^2(M+1)^2}{4}\bigg) N\times o\bigg(\cfrac{1}{M}\bigg)/\Xi,&\\
		a_2&= \cfrac{MN(M+1)(N+1)}{4}	M\times o\bigg(\cfrac{1}{N}\bigg)/\Xi,&\\
		a_3&= \cfrac{MN(N+1)}{2}\Bigg(\cfrac{N(N+1)}{2}\times o\bigg(\cfrac{1}{M}\bigg)+\cfrac{M(M+1)}{2}\times o\bigg(\cfrac{1}{N}\bigg)\Bigg)/\Xi,&
\end{flalign*}
and $$ \Xi= \cfrac{MN}{12} \bigg(N(N^2-1)+M(M^2-1)\bigg),K=\cfrac{N(N+1)(2N+1)}{6}+\cfrac{M(M+1)(2M+1)}{6}.$$
Now we look at $a_1, a_2$ and $a_3$ individually and compute their limits.
 \begin{flalign*} 
 	a_1&=\bigg(MK-\cfrac{M^2(M+1)^2}{4}\bigg) N\times o\bigg(\cfrac{1}{M}\bigg)/\Xi&\\~\\
 	&= N\times \cfrac{o(1)}{\Xi}\times\bigg(\cfrac{N(N+1)(2N+1)}{6}+\cfrac{M(M+1)(M-1)}{12}\bigg)&\\~\\&=
	\cfrac{o(1)}{M}\times \cfrac{\bigg(2N(N+1)(2N+1)+M(M^2-1)\bigg)}{\bigg(N(N^2-1)+M(M^2-1)\bigg)}.&
\end{flalign*}
$	\mbox{ This implies that } a_1\xrightarrow{a.s.} 0\mbox{  as } \min\{M,N\}\rightarrow \infty.$ Here,  
\begin{flalign*}
	a_2&= \cfrac{MN(M+1)(N+1)}{4}	M\times o\bigg(\cfrac{1}{N}\bigg)/\Xi= o(1)\times \cfrac{M(M+1)}{\bigg(N(N^2-1)+M(M^2-1)\bigg)}.&	
\end{flalign*}
$\mbox{ This  implies that } a_2 \xrightarrow{a.s.} 0\mbox{  as } \min\{M,N\}\rightarrow \infty.$
\begin{flalign*}
	a_3&= \cfrac{MN(N+1)}{2}\Bigg(\cfrac{N(N+1)}{2}\times o\bigg(\cfrac{1}{M}\bigg)+\cfrac{M(M+1)}{2}\times o\bigg(\cfrac{1}{N}\bigg)\Bigg)/\Xi&\\~\\&= 
	\cfrac{(N+1)}{2\Xi}\Bigg(N^2(N+1)\times o(1)+M^2(M+1)\times o(1)\Bigg).&
\end{flalign*}
$\mbox{ This implies that } 
a_3 \xrightarrow{a.s.} 0\mbox{  as } \min\{M,N\}\rightarrow \infty.$
Hence, $\widehat{\alpha}$ is strongly consistent estimate of $\alpha^0$. Similarly strong consistency of  $\widehat{\gamma}$  for  $\gamma^0$ can be derived.\\~\\
Now, the third element of $(\bm{\Gamma}^\top \bm{\Gamma})^{-1}\begin{bmatrix}
	N\times o\bigg(\cfrac{1}{M}\bigg)\\
	M\times o\bigg(\cfrac{1}{N}\bigg)\\
	\cfrac{N(N+1)}{2}\times o\bigg(\cfrac{1}{M}\bigg)+\cfrac{M(M+1)}{2}\times o\bigg(\cfrac{1}{N}\bigg)
\end{bmatrix}$ can be written as $-b_1-b_2+b_3$, where
\begin{flalign*}
	b_1&=\cfrac{MN(N+1)}{2}\times 	N\times o\bigg(\cfrac{1}{M}\bigg)/\Xi=o(1)\cfrac{N(N+1)}{M\big(N(N^2-1)+M(M^2-1)\big)}.&
\end{flalign*}
$\mbox{ This implies that } b_1\xrightarrow{a.s.} 0\mbox{  as } \min\{M,N\}\rightarrow \infty.$
\begin{flalign*}
	b_2&= \cfrac{NM(M+1)}{2}\times 	M\times o\bigg(\cfrac{1}{N}\bigg)/\Xi= 
	o(1)\times \cfrac{M(M+1)}{N\big(N(N^2-1)+M(M^2-1)\big)}.&
\end{flalign*}
$\mbox{ This implies that } b_2\xrightarrow{a.s.} 0\mbox{  as } \min\{M,N\}\rightarrow \infty.$
\begin{flalign*}
	b_3&=MN\times \bigg(\cfrac{N(N+1)}{2}\times o\bigg(\cfrac{1}{M}\bigg)+\cfrac{M(M+1)}{2}\times o\bigg(\cfrac{1}{N}\bigg)\bigg)/\Xi&\\~\\&= \bigg(\cfrac{N(N+1)}{\big(N(N^2-1)+M(M^2-1)\big)}\times o\bigg(\cfrac{1}{M}\bigg)+\cfrac{M(M+1)}{\big(N(N^2-1)+M(M^2-1)\big)}\times o\bigg(\cfrac{1}{N}\bigg)\bigg).&
\end{flalign*}
Hence $
b_3\xrightarrow{a.s.} 0\mbox{  as } \min\{M,N\}\rightarrow \infty$.
Hence, $\widehat{\mu}$ is a strongly consistent estimate of $\mu^0$. The	
strong consistency of estimators of amplitude parameters follows from the continuity mapping theorem.

	\section*{Appendix B}\label{AppB}
\hspace{28pt}	\textbf{\large Proof of Theorem \ref{thm_2}:} Suppose $\bm{\kappa}^\top =(A,B,\alpha,\beta) $ and\\ $\bm{\kappa}^{0^\top }= (A^0(n_0),B^0(n_0),\alpha^0+n_0\mu^0,\beta^0)$. We define sum of squares as follows: \[Q_{n_0}(\bm{\kappa})= \displaystyle\sum_{m_0=1}^M\bigg(y(m_0,n_0)-A\cos(\alpha m_0+\beta m_0^2)-B\sin(\alpha m_0+\beta m_0^2)\bigg)^2.\]
Let  $\widehat{\bm{\kappa}}$ be the minimizer of $Q_{n_0}(\bm{\kappa})$, then using Taylor Series expansion on the first derivative vector $Q_{n_0}^{\prime}(\bm{\kappa})$ around the point $\bm{\kappa}^{0}$, we get:
\begin{equation}\label{tayl_ser}
	\bm{Q}_{n_0}^{\prime}(\widehat{\bm{\kappa}})-\bm{Q}_{n_0}^{\prime}(\bm{\kappa}^0) = \bm{Q}_{n_0}^{\prime\prime}(\breve{\bm{\kappa}}) (\widehat{\bm{\kappa}}-\bm{\kappa}^0),
\end{equation}  
where $\breve{\bm{\kappa}}$ is a point between $\widehat{\bm{\kappa}}$ and $\bm{\kappa}^0$. Also $\bm{Q}_{n_0}^{\prime}(\widehat{\bm{\kappa}})=0$ as $\widehat{\bm{\kappa}}$ is LSE of $\bm{\kappa}^0$.\\ 
Let us denote  $\bm{D}_1^{-1}=diag(M^{1/2},M^{1/2},M^{3/2},M^{5/2})$. Then on multiplying $\bm{D}_1^{-1}$ both sides of equation \eqref{tayl_ser}, it gives: 				
\begin{align}\label{final_asmp}
	-\big[\bm{D_1}\bm{Q}_{n_0}^{\prime\prime}(\breve{\bm{\kappa}})\bm{D_1}\big]^{-1}\bm{\Sigma}_{n_0} \displaystyle \bm{\Sigma}_{n_0}^{-1}\bm{D_1}\bm{Q}_{n_0}^{\prime}(\bm{\kappa}^0),& = \bm{D_1}^{-1}(\widehat{\bm{\kappa}}-\bm{\kappa}^0)\\\nonumber~\\
	\mbox{where } \lim \limits_{M\rightarrow \infty}\big[\bm{D_1}\bm{Q}_{n_0}^{\prime\prime}(\breve{\bm{\kappa}})\bm{D_1}\big]^{-1}\bm{\Sigma}_{n_0} &= I_{4\times 4}\nonumber,
\end{align}
and 
\begin{equation*}
	\displaystyle\displaystyle \bm{\Sigma}_{n_0}^{-1}= \cfrac{2}{A^{0^2}+B^{0^2}}	
	\begin{bmatrix}
		\cfrac{A^{0^2}(n_0)+9B^{0^{2}}(n_0)}{2}&-4A^0(n_0)B^0(n_0)&-18B^0(n_0)&15B^0(n_0)\\
		-4A^0(n_0)B^0(n_0)&\cfrac{9A^{0^2}(n_0)+B^{0^{2}}(n_0)}{2}&18A^0(n_0)&-15A^0(n_0)\\
		-18B^0(n_0)&18A^0(n_0)&96&-90\\
		15B^0(n_0)&-15A^0(n_0)&-90&90
	\end{bmatrix}.
\end{equation*}
The expression of $\bm{\Sigma}_{n_0}^{-1}$ can be obtained by proof of Theorem 2 shown in \cite{Lahiri_2015}.
By using equation \eqref{final_asmp} and above expression of $\bm{\Sigma}_{n_0}^{-1}$, we get following asymptotically equivalent (a.e.) expression of the  third element of  vector $\bm{D_1}^{-1}(\widehat{\bm{\kappa}}-\bm{\kappa}^0)$ as:
\begin{align}\label{alp_asy}
	\begin{bmatrix}
		M^{3/2} \big(\widehat{\alpha}_{n_0}-(\alpha^0+n_0\mu^0)\big) 
	\end{bmatrix}&\myeq  \cfrac{4}{A^{0^2}+B^{0^2}}\Bigg[18\displaystyle\sum_{m} \cfrac{\eta(m_0,n_0)}{\sqrt{M}}-96 \displaystyle\sum_{m_0} \cfrac{m_0\eta(m_0,n_0)}{M\sqrt{M}}+90\displaystyle\sum_{m_0} \cfrac{m_0^2\eta(m_0,n_0)}{M^2\sqrt{M}}\Bigg], 
\end{align}		
where, $\eta(m_0,n_0)=X(m_0,n_0)\big(A^0\sin\phi(m_0,n_0,\bm{\xi}^0)-B^0\cos\phi(m_0,n_0,\bm{\xi}^0)\big)$, \\$\phi(m_0,n_0,\bm{\xi}^0)= \alpha^0m_0+\beta^0m_0^2+\gamma^0n_0+\delta^0n_0^2+\mu^0 m_0n_0$. We have used these notations for brevity. Similarly expressions of fourth, fifth and sixth element of $\bm{D_1}^{-1}(\widehat{\bm{\kappa}}-\bm{\kappa}^0)$ can be written as in equations \eqref{bet_asy}, \eqref{gam_asy} and \eqref{delt_asy} respectively:
\begin{align}\label{bet_asy} 
	\begin{bmatrix}
		M^{5/2}\big(\widehat{\beta}_{n_0}-\beta^0\big) 
	\end{bmatrix}&\myeq  \cfrac{4}{A^{0^2}+B^{0^2}}\Bigg[-15\displaystyle\sum_{m_0} \cfrac{\eta(m_0,n_0)}{\sqrt{M}}+90 \displaystyle\sum_{m_0} \cfrac{m_0\eta(m_0,n_0)}{M\sqrt{M}}-90\displaystyle\sum_{m_0} \cfrac{m_0^2\eta(m_0,n_0)}{M^2\sqrt{M}}\Bigg] ,
\end{align}
\begin{align}\label{gam_asy}
	\begin{bmatrix}
		N^{3/2} \big(\widehat{\gamma}_{m_0}-(\gamma^0+m_0\mu^0)\big) 
	\end{bmatrix}&\myeq  \cfrac{4}{A^{0^2}+B^{0^2}}\Bigg[18\displaystyle\sum_{n_0} \cfrac{\eta(m_0,n_0)}{\sqrt{N}}-96 \displaystyle\sum_{n_0} \cfrac{n_0\eta(m_0,n_0)}{N\sqrt{N}}+90\displaystyle\sum_{n_0} \cfrac{n_0^2\eta(m_0,n_0)}{N^2\sqrt{N}}\Bigg], 
\end{align}

\begin{align}\label{delt_asy}
	\begin{bmatrix}
		N^{5/2}\big(\widehat{\delta}_{m_0}-\delta^0\big) 
	\end{bmatrix}&\myeq  \cfrac{4}{A^{0^2}+B^{0^2}}\Bigg[-15\displaystyle\sum_{n_0} \cfrac{\eta(m_0,n_0)}{\sqrt{N}}+90 \displaystyle\sum_{n_0} \cfrac{n_0\eta(m_0,n_0)}{N\sqrt{N}}-90\displaystyle\sum_{n_0} \cfrac{n_0^2\eta(m_0,n_0)}{N^2\sqrt{N}}\Bigg] .
\end{align}
From equations   \eqref{bet_asy} and \eqref{delt_asy} above, we can see that estimators $\widehat{\beta}$ and $\widehat{\delta}$  of  $\beta^0$ and $\delta^0$  are  asymptotically equivalent to the LSEs as they have same asymptotic variances by applying central limit theorem for stationary linear processes, see \cite{Fuller_1976}. So now, remaining is to show asymptotic properties of estimators $  (	\widehat{\alpha},
\widehat{\gamma},
\widehat{\mu})^\top$.  \\~\\
The expression of proposed estimators of $(\alpha^0,\gamma^0, \mu^0)^\top$  obtained is:
\begin{equation} \label{matrix_final}
	\begin{bmatrix}
		\widehat{\alpha}\\
		\widehat{\gamma}\\
		\widehat{\mu}
	\end{bmatrix} = \begin{bmatrix}
		\bigg(MK-\cfrac{M^2\big(M+1\big)^2}{4}\bigg)c_{\alpha}+ \cfrac{MN(M+1)(N+1)}{4}c_{\gamma}-\cfrac{MN(N+1)}{2}c_{\alpha\gamma}\\~\\
		\cfrac{MN(M+1)(N+1)}{4}c_{\alpha}+\bigg(KN-\cfrac{N^2(N+1)^2}{4}\bigg)c_{\gamma}-\cfrac{NM(M+1)}{2}c_{\alpha\gamma}\\~\\
		-\cfrac{MN(N+1)}{2}c_{\alpha}-\cfrac{NM(M+1)}{2}c_{\gamma}+MNc_{\alpha\gamma}
	\end{bmatrix}/| \bm{\Gamma}^\top \bm{\Gamma}|,
\end{equation}
\[c_{\alpha}=\displaystyle\sum_{n_0=1}^{N}\widehat{\alpha}_{n_0}, c_{\gamma}=\displaystyle\sum_{m_0=1}^M\widehat{\gamma}_{m_0}, c_{\alpha\gamma} = \displaystyle\sum_{n_0=1}^{N}n_0\widehat{\alpha}_{n_0}+\displaystyle\sum_{m_0=1}^Mm_0\widehat{\gamma}_{m_0},\]
\[ K=\cfrac{N(N+1)(2N+1)}{6}+\cfrac{M(M+1)(2M+1)}{6},  \big|\bm{\Gamma}^\top \bm{\Gamma}\big|= \cfrac{MN}{12} \bigg(N(N^2-1)+M(M^2-1)\bigg).\] 
From \eqref{matrix_final}, we get:
\begin{align*}
M^{3/2}N^{1/2}\big(\widehat{\alpha}-\alpha^0\big)&= \bigg(\cfrac{2N(N+1)(2N+1)+M(M^2-1)}{N(N^2-1)+M(M^2-1)}\bigg)\cfrac{1}{\sqrt{N}}\displaystyle\sum_{n_0=1}^{N}M^{3/2}\big(\widehat{\alpha}_{n_0}-(\alpha^0+n_0\mu^0)\big)\\&+
\bigg(\cfrac{3M^2(M+1)}{N(N^2-1)+M(M^2-1)}\bigg)\cfrac{1}{\sqrt{M}}\displaystyle\sum_{m_0=1}^MN^{3/2}\big(\widehat{\gamma}_{m_0}-(\gamma^0+m_0\mu^0)\big)\\&-
\bigg(\cfrac{6M^{3/2}N^{1/2}(N+1)}{N(N^2-1)+M(M^2-1)}\bigg)\bigg[\displaystyle\sum_{n_0=1}^{N}n_0\big(\widehat{\alpha}_{n_0}-(\alpha^0+n_0\mu^0)\big)\\&+\displaystyle\sum_{m_0=1}^Mm_0\big(\widehat{\gamma}_{m_0}-(\gamma^0+m_0\mu^0)\big)\bigg].
\end{align*}	
For sufficiently large $M$ and $N$, we have:		
\begin{align}\label{expr_alp}
M^{3/2}N^{1/2}\big(\widehat{\alpha}-\alpha^0\big)&=\bigg(\cfrac{4N^3+M^3}{N^3+M^3}\bigg) \cfrac{1}{\sqrt{N}}\displaystyle\sum_{n_0=1}^{N}M^{3/2}\big(\widehat{\alpha}_{n_0}-(\alpha^0+n_0\mu^0)\big)\nonumber\\&+\bigg(\cfrac{3M^3}{N^3+M^3}\bigg)\cfrac{1}{\sqrt{M}}\displaystyle\sum_{m_0=1}^MN^{3/2}\big(\widehat{\gamma}_{m_0}-(\gamma^0+m_0\mu^0)\big)\nonumber\\&-\bigg(\cfrac{6N^3}{N^3+M^3}\bigg)\cfrac{1}{N^{3/2}}\displaystyle\sum_{n_0=1}^{N}n_0M^{3/2}\big(\widehat{\alpha}_{n_0}-(\alpha^0+n_0\mu^0)\big)\nonumber\\&-
\bigg(\cfrac{6M^3}{N^3+M^3}\bigg)\cfrac{1}{M^{3/2}}\displaystyle\sum_{m_0=1}^Mm_0N^{3/2}\big(\widehat{\gamma}_{m_0}-(\gamma^0+m_0\mu^0)\big).
\end{align}
Asymptotic normality of the estimators for $M=N\rightarrow\infty$ with given rates of convergence follows by applying central limit theorem for stationary processes \cite{Fuller_1976}.\\  

We now present some important results which will be used to find the asymptotic variance-covariance matrix of the proposed estimators of non-linear parameters. Using equations  \eqref{alp_asy} and  \eqref{gam_asy} in the paper, we have the following observations:
\begin{enumerate}
	\item $Asy Var \bigg(\cfrac{1}{2\sqrt{N}}\displaystyle\sum_{n_0=1}^{N}M^{3/2}\big(\widehat{\alpha}_{n_0}-(\alpha^0+n_0\mu^0)\big)\bigg)=\cfrac{c\sigma^2 96}{A^{0^2}+B^{0^2}}$,
	\item $Asy Var \bigg(\cfrac{1}{2\sqrt{M}}\displaystyle\sum_{m_0=1}^MN^{3/2}\big(\widehat{\gamma}_{m_0}-(\gamma^0+m_0\mu^0)\big)\bigg)=\cfrac{c\sigma^296}{A^{0^2}+B^{0^2}}$,
	\item $Asy Var \bigg(\cfrac{1}{2N^{3/2}}\displaystyle\sum_{n_0=1}^{N}n_0M^{3/2}\big(\widehat{\alpha}_{n_0}-(\alpha^0+n_0\mu^0)\big)\bigg)=\cfrac{c\sigma^232}{A^{0^2}+B^{0^2}}$,\label{doubt_1}
	\item  $Asy Var \bigg(\cfrac{1}{2M^{3/2}}\displaystyle\sum_{m_0=1}^Mm_0N^{3/2}\big(\widehat{\gamma}_{m_0}-(\gamma^0+m_0\mu^0)\big)\bigg)=\cfrac{c\sigma^232}{A^{0^2}+B^{0^2}}$,\label{doubt_2}
	\item  $Asy Covar \bigg(\cfrac{1}{2\sqrt{N}}\displaystyle\sum_{n_0=1}^{N}M^{3/2}\big(\widehat{\alpha}_{n_0}-(\alpha^0+n_0\mu^0)\big),\cfrac{1}{2\sqrt{M}}\displaystyle\sum_{m_0=1}^MN^{3/2}\big(\widehat{\gamma}_{m_0}-(\gamma^0+m_0\mu^0)\big)\bigg)=0$,
	\item $Asy Covar \bigg(\cfrac{1}{2\sqrt{N}}\displaystyle\sum_{n_0=1}^{N}M^{3/2}\big(\widehat{\alpha}_{n_0}-(\alpha^0+n_0\mu^0)\big),\cfrac{1}{2N^{3/2}}\displaystyle\sum_{n_0=1}^{N}n_0M^{3/2}\big(\widehat{\alpha}_{n_0}-(\alpha^0+n_0\mu^0)\big)\bigg)=\cfrac{c\sigma^248}{A^{0^2}+B^{0^2}}$,
	\item $Asy Covar \bigg(\cfrac{1}{2\sqrt{N}}\displaystyle\sum_{n_0=1}^{N}M^{3/2}\big(\widehat{\alpha}_{n_0}-(\alpha^0+n_0\mu^0)\big),\cfrac{1}{2M^{3/2}}\displaystyle\sum_{m_0=1}^Mm_0N^{3/2}\big(\widehat{\gamma}_{m_0}-(\gamma^0+m_0\mu^0)\big)\bigg)=0$,
	\item $Asy Covar \bigg(\cfrac{1}{2\sqrt{M}}\displaystyle\sum_{m_0=1}^MN^{3/2}\big(\widehat{\gamma}_{m_0}-(\gamma^0+m_0\mu^0)\big),\cfrac{1}{2N^{3/2}}\displaystyle\sum_{n_0=1}^{N}n_0M^{3/2}\big(\widehat{\alpha}_{n_0}-(\alpha^0+n_0\mu^0)\big)\bigg)=0$,
	\item $Asy Covar \bigg(\cfrac{1}{2\sqrt{M}}\displaystyle\sum_{m_0=1}^MN^{3/2}\big(\widehat{\gamma}_{m_0}-(\gamma^0+m_0\mu^0)\big),\cfrac{1}{2M^{3/2}}\displaystyle\sum_{m_0=1}^Mm_0N^{3/2}\big(\widehat{\gamma}_{m_0}-(\gamma^0+m_0\mu^0)\big)\bigg)=\cfrac{c\sigma^248}{A^{0^2}+B^{0^2}}$,
	\item  $Asy Covar \bigg(\cfrac{1}{2N^{3/2}}\displaystyle\sum_{n_0=1}^{N}n_0M^{3/2}\big(\widehat{\alpha}_{n_0}-(\alpha^0+n_0\mu^0)\big),\cfrac{1}{2M^{3/2}}\displaystyle\sum_{m_0=1}^Mm_0N^{3/2}\big(\widehat{\gamma}_{m_0}-(\gamma^0+m_0\mu^0)\big)\bigg)=\cfrac{c\sigma^2}{2(A^{0^2}+B^{0^2})}$.
\end{enumerate}\text{}\\~\\\text{}
where $c=\displaystyle{\sum_{i=-\infty}^{\infty}\sum_{j=-\infty}^{\infty}}a^2(i,j)$.\\~\\
Using the above results in  equation \eqref{expr_alp} from the paper, we get asymptotic variance-covariance matrix of\\~\\ $\begin{bmatrix}
	M^{3/2}N^{1/2}(\widehat{\alpha}-\alpha^0)\\
	N^{3/2}M^{1/2}(\widehat{\gamma}-\gamma^0)\\
	M^{3/2}N^{3/2}(\widehat{\mu}-\mu^0)
\end{bmatrix}$ as:
\(
\cfrac{c\sigma^2}{(A^{0^2}+B^{0^2})}\begin{bmatrix}
	996&612&-1224\\
	612	&996&-1224\\
	-1224	&-1224&2448
\end{bmatrix}	.
\)\\~\\
From  \eqref{alp_asy},  \eqref{bet_asy} and \eqref{gam_asy} equations of the paper, it is further observed that:
\begin{enumerate}
	\item 	$AsyCovar\bigg(	M^{5/2}N^{1/2}\big(\widehat{\beta}-\beta^0\big), M^{1/2}N^{5/2}\big(\widehat{\delta}-\delta^0\big)\bigg)=0$,
	\item 	$AsyCovar\bigg(	M^{5/2}N^{1/2}\big(\widehat{\beta}-\beta^0\big),\cfrac{1}{\sqrt{N}}\displaystyle\sum_{n_0=1}^{N}M^{3/2}\big(\widehat{\alpha}_{n_0}-(\alpha^0+n_0\mu^0)\big)\bigg)=\cfrac{-360c\sigma^2}{A^{0^2}+B^{0^2}}$,
	\item 	$AsyCovar\bigg(	M^{5/2}N^{1/2}\big(\widehat{\beta}-\beta^0\big),\cfrac{1}{\sqrt{M}}\displaystyle\sum_{m_0=1}^MN^{3/2}\big(\widehat{\gamma}_{m_0}-(\gamma^0+m_0\mu^0)\big)\bigg)=0$,
	\item 	$AsyCovar\bigg(	M^{5/2}N^{1/2}\big(\widehat{\beta}-\beta^0\big),\cfrac{1}{N\sqrt{N}}\displaystyle\sum_{n_0=1}^{N}M^{3/2}n_0\big(\widehat{\alpha}_{n_0}-(\alpha^0+n_0\mu^0)\big)\bigg)=\cfrac{-180c\sigma^2}{A^{0^2}+B^{0^2}}$,
	\item 	$AsyCovar\bigg(	M^{5/2}N^{1/2}\big(\widehat{\beta}-\beta^0\big),\cfrac{1}{M\sqrt{M}}\displaystyle\sum_{m_0=1}^Mm_0N^{3/2}\big(\widehat{\gamma}_{m_0}-(\gamma^0+m_0\mu^0)\big)\bigg)=0.$
\end{enumerate}
Similar results can be derived for $M^{1/2}N^{5/2}\big(\widehat{\delta}-\delta^0\big)$.\\~\\
Asymptotic variance-covariance matrix of the proposed estimators of non-linear parameters is given by:
\begin{equation}
\cfrac{c\sigma^2}{(A^{0^2}+B^{0^2})}\begin{bmatrix}
	996&-360&612&0&-1224\\
	-360&360&0&0&0\\
	612&0&996&-360&-1224\\
	0&0&-360&360&0\\
	-1224&0&-1224&0&2448
\end{bmatrix}.	
\end{equation}
Next, we derive the asymptotics of amplitude estimators, please recall that by using Taylor series expansion of $\cos\phi(m_0,n_0,\widehat{\bm{\xi}})$ around the point $\bm{\xi}^0$, we can write: \[\cos\widehat{\phi}-\cos\phi^0=-\sin\breve{\phi}\big(\widehat{\bm{\xi}}-\bm{\xi}^0\big)^\top \begin{bmatrix}
	m_0\\m_0^2\\n_0\\n_0^2\\m_0n_0
\end{bmatrix}.\]
For brevity, we have denoted $\cos\phi(m_0,n_0,\widehat{\bm{\xi}})$ by $\cos\widehat{\phi}$, $\cos\phi(m_0,n_0,\widehat{\bm{\xi}})$ by $\cos\phi^0$, and $\sin\phi(m_0,n_0,\breve{\bm{\xi}})$  by  $\sin\breve{\phi}$ 
, where $\breve{\bm{\xi}}$ is a point lying between $\widehat{\bm{\xi}}$ and $\bm{\xi}^0$.\\~\\
Now consider first element of  the following vector,  
$$\sqrt{MN}\Bigg(\begin{bmatrix}
	\cfrac{2}{{MN}}\displaystyle \sum_{m_0=1}^M\sum_{n_0=1}^{N}y(m_0,n_0)\cos\widehat{\phi} \\
	\cfrac{2}{{MN}}\displaystyle \sum_{m_0=1}^M\sum_{n_0=1}^{N}y(m_0,n_0)\sin\widehat{\phi} 
\end{bmatrix}-\begin{bmatrix}
	A^0\\B^0
\end{bmatrix}\Bigg) ,$$ we get:
\begin{align}\label{asymp_proof_step_1}
	\sqrt{MN}\bigg(	\cfrac{2}{{MN}}\displaystyle \sum_{m_0=1}^M\sum_{n_0=1}^{N}y(m_0,n_0)\cos{\phi^0}-\cfrac{2}{{MN}}\displaystyle \sum_{m_0=1}^M\sum_{n_0=1}^{N}y(m_0,n_0)\sin{\breve{\phi}}R(m_0,n_0)-A^0\bigg),	
\end{align}
where $R(m_0,n_0)=\big(\widehat{\bm{\xi}}-\bm{\xi}^0\big)^\top \begin{bmatrix}
	m_0\\m_0^2\\n_0\\n_0^2\\m_0n_0
\end{bmatrix}$, and second element of the above amplitude vector  can be written as:
\begin{align}\label{asymp_proof_step_2_B}
	\sqrt{MN}\bigg(	\cfrac{2}{{MN}}\displaystyle \sum_{m_0=1}^M\sum_{n_0=1}^{N}y(m_0,n_0)\sin{\phi^0}+\cfrac{2}{{MN}}\displaystyle \sum_{m_0=1}^M\sum_{n_0=1}^{N}y(m_0,n_0)\sin{\breve{\phi}}R(m_0,n_0)-B^0\bigg).	
\end{align}
Now let us look at the first and the last term of equation \eqref{asymp_proof_step_1} and putting value of $y(m_0,n_0)$ from the model \eqref{model_2D},
\begin{align}
	&	\sqrt{MN}\bigg(	\cfrac{2}{{MN}}\displaystyle \sum_{m_0=1}^M\sum_{n_0=1}^{N}y(m_0,n_0)\cos{\phi^0}-A^0\bigg)\nonumber\\&=
	\sqrt{MN}\bigg(	\cfrac{2}{{MN}}\displaystyle \sum_{m_0=1}^M\sum_{n_0=1}^{N}A^0\cos^2\phi^0-A^0+ \cfrac{2}{{MN}}\displaystyle \sum_{m_0=1}^M\sum_{n_0=1}^{N}B^0\sin\phi^0\cos\phi^0+\cfrac{2}{{MN}}\displaystyle \sum_{m_0=1}^M\sum_{n_0=1}^{N}X(m_0,n_0)\cos\phi^0\bigg)\nonumber\\&\myeq
	\cfrac{2}{\sqrt{MN}}\displaystyle \sum_{m_0=1}^M\sum_{n_0=1}^{N}X(m_0,n_0)\cos\phi^0.
\end{align}\\
The above result has been obtained from a famous number theory conjecture by \cite{Montgomery}.\\
 In second term of \eqref{asymp_proof_step_1}, $R(m_0,n_0)$ is a sum of five terms, so now consider first term of \\ $\cfrac{2}{{MN}}\displaystyle \sum_{m_0=1}^M\sum_{n_0=1}^{N}y(m_0,n_0)\sin{\breve{\phi}}R(m_0,n_0)$,
\begin{align*}
	&	\cfrac{2}{{\sqrt{MN}}}\displaystyle \sum_{m_0=1}^M\sum_{n_0=1}^{N}y(m_0,n_0)\sin{\breve{\phi}}m_0(\widehat{\alpha}-\alpha^0)\nonumber\\&=
	\cfrac{2}{{\sqrt{MN}}}\displaystyle \sum_{m_0=1}^M\sum_{n_0=1}^{N}A^0\cos\phi^0\sin{\breve{\phi}}m_0(\widehat{\alpha}-\alpha^0)+
	\cfrac{2}{{\sqrt{MN}}}\displaystyle \sum_{m_0=1}^M\sum_{n_0=1}^{N}B^0\sin\phi^0\sin{\breve{\phi}}m_0(\widehat{\alpha}-\alpha^0)\nonumber\\&+
	\cfrac{2}{{\sqrt{MN}}}\displaystyle \sum_{m_0=1}^M\sum_{n_0=1}^{N}X(m_0,n_0)\sin{\breve{\phi}}m_0(\widehat{\alpha}-\alpha^0) \end{align*} 
\begin{align*} 
&=	\bigg(\cfrac{2}{{{M^2N}}}\displaystyle \sum_{m_0=1}^M\sum_{n_0=1}^{N}A^0\cos\phi^0\sin{\breve{\phi}}m_0\bigg)M\sqrt{MN}(\widehat{\alpha}-\alpha^0)+\\&\hspace{20pt}
	\bigg(\cfrac{2}{{{M^2N}}}\displaystyle \sum_{m_0=1}^M\sum_{n_0=1}^{N}B^0\sin\phi^0\sin{\breve{\phi}}m_0\bigg)M\sqrt{MN}(\widehat{\alpha}-\alpha^0)+\nonumber\\&\hspace{20pt}
	\bigg(\cfrac{2}{{{M^2N}}}\displaystyle \sum_{m_0=1}^M\sum_{n_0=1}^{N}X(m_0,n_0)\sin{\breve{\phi}}m_0\bigg)M\sqrt{MN}(\widehat{\alpha}-\alpha^0)\\&\myeq 	\bigg(\cfrac{2}{{{M^2N}}}\displaystyle \sum_{m_0=1}^M\sum_{n_0=1}^{N}B^0\sin\phi^0\sin{\breve{\phi}}m_0\bigg)M\sqrt{MN}(\widehat{\alpha}-\alpha^0), \mbox{\hspace{20pt}by using Proposition 1 of   \cite{Lahiri_2017}.}	
\end{align*}

So, finding the asymptotic distribution of \eqref{asymp_proof_step_1} boils down to finding asymptotic distribution of $$	\cfrac{2}{\sqrt{MN}}\displaystyle \sum_{m_0=1}^M\sum_{n_0=1}^{N}X(m_0,n_0)\cos\phi^0-\bigg(\cfrac{2}{{\sqrt{MN}}}\displaystyle \sum_{m_0=1}^M\sum_{n_0=1}^{N}B^0\sin\phi^0\sin{\breve{\phi}}\bigg)R(m_0,n_0)	,$$ which is further asymptotically equivalent to:
\begin{align}\label{final_expression_amplitude}
	&\cfrac{2}{\sqrt{MN}}\displaystyle \sum_{m_0=1}^M\sum_{n_0=1}^{N}X(m_0,n_0)\cos\phi^0-B^0\bigg(\cfrac{1}{2}M^{3/2}N^{1/2}(\widehat{\alpha}-\alpha^0)+\cfrac{1}{3}M^{5/2}N^{1/2}(\widehat{\beta}-\beta^0)\nonumber\\&+\cfrac{1}{2}N^{3/2}M^{1/2}(\widehat{\gamma}-\gamma^0)+\cfrac{1}{3}N^{5/2}M^{1/2}(\widehat{\delta}-\delta^0)+\cfrac{1}{4}M^{3/2}N^{3/2}(\widehat{\mu}-\mu^0)\bigg).
\end{align}
Asymptotic normality of amplitude estimators is thus proved by equation \eqref{final_expression_amplitude}. Now we need to derive the expression of their asymptotic variances.\\~\\
After lengthy calculations, we get the asymptotic variance of $\widehat{A}$ as follows: $$\cfrac{c\sigma^2}{(A^{0^2}+B^{0^2})}(2A^{0^2}+187B^{0^2}).$$ Similarly by calculating other terms too, we get complete variance co-variance matrix $\bm{\Sigma}$ as mentioned in Theorem \ref{thm_2}.\\~\\
Hence the result.

\end{document}